\newtheorem{theorem}{Theorem}
\newtheorem{property}{Property}
\DeclareMathAlphabet{\mathcal}{OMS}{cmsy}{m}{n}
\newcommand{\nop}[1]{}
\newcommand{\todo}[1]{\textcolor{red}{[\sc #1]}}
\newcommand{\updates}[1]{\textcolor{black}{#1}}
\newcommand{\tabincell}[2]{\begin{tabular}{@{}#1@{}}#2\end{tabular}}
\newcommand{\NA}{---}
\newtheorem{corollary}{Corollary}
\begin{document}
%
\title{Mining Density Contrast Subgraphs}

\author{\IEEEauthorblockN{Yu Yang$^{1}$, Lingyang Chu$^{1}$, Yanyan Zhang$^{1}$, Zhefeng Wang$^{2}$, Jian Pei$^{1}$ and Enhong Chen$^{2}$}
\IEEEauthorblockA{
$^{1}$Simon Fraser University, Burnaby, Canada\\
$^{2}$University of Science and Technology of China, Hefei, China\\
\{yya119,lca117,yanyanz\}@sfu.ca, zhefwang@mail.ustc.edu.cn, jpei@cs.sfu.ca, cheneh@ustc.edu.cn}
}

\maketitle

\begin{abstract}
Dense subgraph discovery is a key primitive in many graph mining applications, such as detecting communities in social networks and mining gene correlation from biological data. Most studies on dense subgraph mining only deal with one graph. However, in many applications, we have more than one graph describing relations among a same group of entities. In this paper, given two graphs sharing the same set of vertices, we investigate the problem of detecting subgraphs that contrast the most with respect to density. We call such subgraphs Density Contrast Subgraphs, or DCS in short. Two widely used graph density measures, average degree and graph affinity, are considered. For both density measures, mining DCS is equivalent to mining the densest subgraph from a ``difference'' graph, which may have both positive and negative edge weights. Due to the existence of negative edge weights, existing dense subgraph detection algorithms cannot identify the subgraph we need. We prove the computational hardness of mining DCS under the two graph density measures and develop efficient algorithms to find DCS. We also conduct extensive experiments on several real-world datasets to evaluate our algorithms. The experimental results show that our algorithms are both effective and efficient.
\end{abstract}


%
\IEEEpeerreviewmaketitle

\section{Introduction}\label{sec:intro}
Dense subgraph extraction lies at the core of graph data mining. The problem and its variants have been intensively studied. Most of the existing studies focus on finding the densest subgraph in one network. For example, polynomial time algorithms and efficient approximation algorithms are devised to find the subgraph with maximum average degree~\cite{goldberg1984finding}. There are also quadratic programming methods for extracting subgraphs with high graph affinity density~\cite{liu2013fast,pavan2007dominant}.

In many real-world applications, there are often more than one kind of relations among objects studied. Thus, it is common to have more than one graph describing a same set of objects, one kind of relation captured by one graph. As a result, an interesting contrast data mining problem arises. Given two graphs sharing the same set of vertices, what is the subgraph such that the gap between its density in the two graphs is the largest? We call such a subgraph the \textbf{Density Contrast Subgraph} (\textbf{DCS}). 

\updates{To demonstrate the power of DCS, consider the task of surveying and summarizing the trends of an area, say data mining research. Such a task is practical and common for technical writers, academic researchers and graduate students among many others. Based on a database of published data mining papers, how can we detect trends from the database automatically? Angel~\textit{et~al.}~\cite{angel2012dense} proposed to build a keyword association graph from the input text data, and identify stories/topics via groups of densely-connected keywords from it. For example, applying the method of~\cite{angel2012dense} on data mining papers we may find a topic ``scalable tensor factorization'', because the words ``scalable'', ``tensor'' and ``factorization'' often co-occur in papers. However, directly extracting dense subgraphs corresponding to densely-connected keywords from a keyword association graph like~\cite{angel2012dense} may not help us detect trends effectively. For example, in our experiments, if we just extract dense subgraphs from the graph indicating pairwise keywords association strength in the titles of data mining papers published in the last 10 years, we find topics ``time series'' and ``feature selection''. But these two topics have been intensively investigated ever since and do not present a new trend. In the recent 10 years, according to the graph density measure on the data we have, the topic ``time series'' even cooled down a little bit.}

\updates{To detect trends effectively, we take two keyword association graphs into consideration and apply DCS algorithms. Besides the keyword association graph based on papers published recently, we also need the other keyword association graph derived from the papers published in early years. Those groups of keywords whose connection strengths are much tighter in the recent keyword association graph than in the early keyword association graph are identified as trends in data mining. In our experiments we obtained results like ``social networks'', ``matrix factorization'' and ``unsupervised feature selection''. These topics all became popular only in recent years. }

\updates{DCS can also be applied to detecting current anomalies against historical data. Specifically, we can build a weighted graph where the edge weights are our expectation of how tightly the vertices are connected to each other, which can be derived from, for example, historical data. Then, we observe the current pairwise connection strength of vertices, and build another weighted graph based on our observations. We apply DCS on these two weighted graphs. Some typical application scenarios include detecting emerging traffic hotspot clutters, emerging communities in social networks, and money launderer dark networks.}

\nop{
One application of DCS is time-evolving graph analysis. For example, consider the co-author networks. If we have two co-author networks, one is the network with co-authorship in the most recent five years, and the other is the network with co-authorship more than five years ago. What is the emerging co-author group where the collaboration among the authors has been enhanced the most in the last five years? What is the disappearing co-author group whose collaboration has faded away the most in the past five years? Both questions can be answered by finding density contrast subgraphs in the two co-author networks, if we model the strength of collaboration among a group of authors as the density of their induced subgraph in a co-author network. 
Another interesting application of DCS is analyzing content-based social networks, where there are naturally two graphs about social network users. The first one is a social graph, where edges between users represent social ties. The other graph is a user topic similarity graph, which can be easily derived from the abundant user content data in the content-based social network. The edges represent the similarity between users' content. In general, two users who generate very similar content may not have to have strong social ties. At the same time, according to the homophily phenomenon, users tightly connected to each other socially may have a good chance to share some common content topics. By mining density contrast subgraphs in the two graphs, we can find outlier communities: a group of users whose content similarity connections are much denser than their connections in the social graph or a group of users who are socially densely connected but dissimilar to each other in their generated contents. Those outlier communities may lead to business opportunities, such as friend recommendation in social networks. Clearly, finding dense subgraphs in one graph and then check the density of the counterparts in the other graph is not effective since, due to the homophily phenomenon, the chance is high that the corresponding subgraphs in both graphs are dense simultaneously. 
}

In this paper, we study the Density Contrast Subgraph problem under two widely adopted graph density measures, average degree and graph affinity. One may notice that for both density measures, we may form a ``difference'' graph, where the weight of each edge is obtained by the difference of the weights of this edge in the two graphs. However, this does not mean that traditional densest subgraph extraction methods can be applied to find density contrast subgraphs. In the traditional densest subgraph problems, edge weights are always positive. In the difference graph of the density contrast subgraph problem, we may have negative edge weights. The existence of negative edge weights changes the nature of densest subgraph finding substantially. For example, finding the densest subgraph with respect to average degree in a graph with only positive edge weights is polynomial time solvable~\cite{goldberg1984finding}, and has an efficient 2-approximation algorithm~\cite{charikar2000greedy}, while if the graph has negative edge weights, it becomes NP-hard and also hard to approximate as to be proved in Section~\ref{sec:ad}.

To tackle the Density Contrast Subgraph problem, we make several technical contributions. We prove the computational hardness of finding DCS under the two density measures of average degree and graph affinity. For the average degree measure, we also prove it is hard to approximate within a factor of $O(n^{1-\epsilon})$. An efficient $O(n)$-approximation algorithm is then developed to solve this problem. The DCS problem under the graph affinity measure is also NP-hard, and is a QP (Quadratic Programming) which is non-concave. For this problem, we first devise an efficient 2-Coordinate Descent algorithm that is guaranteed to converge to a KKT point. Based on the 2-coordinate descent algorithm, we give a constructive proof of that edges of negative weights cannot appear in a DCS with respect to graph affinity. Using our construction, we can further improve a KKT point solution to a positive clique solution. A smart initialization heuristic is proposed to reduce the number of initializations for our iterative algorithm, which in experiments brings us  speedups of 1-3 orders of magnitude. Extensive empirical studies are conducted to demonstrate the effectiveness and efficiency of our algorithms. 

The rest of the paper is organized as follows. We review the related work in Section~\ref{sec:related}. In Section~\ref{sec:pre}, we briefly introduce the two density measures used in our work, average degree and graph affinity, and formulate the Density Contarst Subgraph problem. In Section~\ref{sec:ad}, we give our solutions to the DCS problem under the measure of average degree. In Section~\ref{sec:ga}, we tackle the DCS problem under the graph affinity measure. We report the experimental results in Section~\ref{sec:exp} and conclude the paper in Section~\ref{sec:con}.

\section{Related Work}\label{sec:related}
Dense subgraph extraction is a key problem in both algorithmic graph theory and graph mining applications~\cite{tsourakakis2013denser, angel2012dense, mitzenmacher2015scalable, khuller2009finding, fratkin2006motifcut}. One of the most popular definitions of subgraph density is the average degree. Intensive studies have been conducted on finding a subgraph with the maximum average degree in one single graph~\cite{bhattacharya2015space, epasto2015efficient, goldberg1984finding, charikar2000greedy}. Goldberg~\cite{goldberg1984finding} first proposed a polynomial time algorithm based on maximum flow.  Charikar~\cite{charikar2000greedy} described a simple greedy algorithm which has an approximation ratio of 2. 

Besides average degree, graph affinity, which is a quadratic function $\textbf{x}^\top A \textbf{x}$ of a subgraph embedding $\textbf{x} \in \triangle^n$, is also widely adopted as a measure of subgrah density~\cite{liu2013fast,pavan2007dominant,chu2015alid,wang2016tradeoffs}. Motzkin and Straus~\cite{motzkin1965maxima} proved that, for unweighted graphs, maximizing graph affinity is equivalent to finding the maximum clique in the graph. Pavan and Pelillo~\cite{pavan2007dominant} first proposed an algorithm based on replicator dynamics to find local maximas of the quadratic function $\textbf{x}^\top A \textbf{x}$ on the simplex $\triangle^n$. Liu~\textit{et~al.}~\cite{liu2013fast} proposed a highly efficient algorithm called SEA (see Appendix) to solve the problem, where the core idea is to use a shrink-and-expand strategy to accelerate the process of finding Karush-Kuhn-Tucker (KKT) points. Wang~\textit{et~al.}~\cite{wang2016tradeoffs} discussed the trade-off between the graph affinity density and subgraph size in extracting dense subgraphs.

Please note that the existing work on maintaining dense subgraphs on temporal graphs~\cite{angel2012dense,bhattacharya2015space, epasto2015efficient, bahmani2012densest} cannot solve our problem, although two consecutive snapshots of a temporal graph can be regarded as a special case of the input to DCS. \cite{angel2012dense, bhattacharya2015space, epasto2015efficient, bahmani2012densest} are all for extracting dense subgraphs from the latest snapshot, where for a valid input there are no edges with negative weights. The algorithms in~\cite{bhattacharya2015space, epasto2015efficient} can only deal with unweighted graphs. In our problem, mining DCS from two graphs is equivalent to mining dense subgraphs from a ``difference graph'', which may have negative edge weights. We show in Section~\ref{sec:ad} that, when the density measure is average degree, the existence of negative edge weights makes our DCS problem NP-hard and hard to approximate. This is dramatically different from extracting densest subgraph with respect to average degree~\cite{bhattacharya2015space, epasto2015efficient, bahmani2012densest}, which is polynomial time solvable and has an efficient 2-approximation algorithm. For the graph affinity density measure, ~\cite{angel2012dense} considers a general definition of subgraph density where edge density, the discrete version of graph affinity, is used as a special case. However, ~\cite{angel2012dense} is for maintaining all subgraphs whose density is greater than a threshold, and only subgraphs with size (\#vertices) smaller than $N_{max}=5$ is considered. Thus, although mining DCS with respect to graph affinity can be reduced to finding the densest subgraph in one single graph (the ``difference'' graph), techniques in~\cite{angel2012dense} still cannot be used. 

Mining dense subgraphs from multiple networks to find ``coherent'' dense subgraphs also attracts much research interest~\cite{hu2005mining,li2012pattern,kelley2005systematic,wu2016mining}. For example, Wu~\textit{et~al.}~\cite{wu2016mining} investigated the problem of finding a subgraph that is dense in one conceptual network and also connected in a physical network. All these studies focus on finding ``coherent'' dense subgraphs in multiple graphs.
\nop{
Mining dense subgraphs from multiple networks also attracts much research interest. Hu~\textit{et~al.}~\cite{hu2005mining} and Li~\textit{et~al.}~\cite{li2012pattern} studied how to find coherent dense subgraphs whose edges are not only densely connected but also frequently occur in multiple gene co-expression networks. Finding co-dense subgraphs that exist in multiple gene co-expression or protein interaction networks were investigated
in~\cite{kelley2005systematic,pei2005mining}. Recently, Wu~\textit{et~al.}~\cite{wu2016mining} investigated the problem of finding a subgraph that is dense in one conceptual network and also connected in a physical network. All these studies focus on finding ``coherent'' dense subgraphs in multiple graphs. Finding subgraphs that contrast in graph density measures from multiple graphs remains untouched.}

\nop{
Another line of related research is contrast data mining, which aims at discovering patterns and models that manifest drastic differences between data sets. Dong and Li~\cite{dong1999efficient} introduced emerging patterns to capture useful contrasts between data classes. Ting and Bailey~\cite{ting2006mining} proposed algorithms to find the minimal contrast subgraph, which is a graph pattern appears in one graph but not in the other graph, and all of its proper subgraphs are either shared by or not contained in the two graphs. Yang~\textit{et~al.}~\cite{yang2014mining} studied the problem of detecting the most frequently changing subgraph in two consecutive snapshots of a time-evolving graph. The difference between~\cite{yang2014mining} and our work is that, rather than subgraph density, \cite{yang2014mining} adopts the maximum number of independent paths (maximum flow) to measure how much a subgraph changes in two consecutive snapshots.
}
Another line of related research is contrast graph mining, which aims at discovering subgraphs that manifest drastic differences between graphs. \updates{Wang~\textit{et~al.}~\cite{wang2008spatial} and Gionis~\textit{et~al.}~\cite{gionis2015bump} studied how to find the anomalous subgraphs that contrast others in one graph.} Ting and Bailey~\cite{ting2006mining} proposed algorithms to find the minimal contrast subgraph, which is a graph pattern appears in one graph but not in the other graph, and all of its proper subgraphs are either shared by or not contained in the two graphs. Yang~\textit{et~al.}~\cite{yang2014mining} studied the problem of detecting the most frequently changing subgraph in two consecutive snapshots of a time-evolving graph. The major difference between these studies and our work is that, none of these studies adopt subgraph density as the measure for mining contrast subgraphs. 

\nop{
Cadena~\textit{et~al.}~\cite{cadena2016dense} investigated how to extract the subgraph whose total edge weight deviates from its expected total edge weight the most, which is the work closest to ours in literature. Although the total edge weight of a subgraph is related to our density measures, average degree and graph affinity\footnote{The total edge weight of a subgraph is the numerator of this subgraph's average degree and edge density, which is often regarded as the discrete version of graph affinity~\cite{liu2013fast,wang2016tradeoffs}.}, these three measures are still quite different from each other. Thus, properties of and solutions to the problem in~\cite{cadena2016dense} and our problems are very different. 
}
\updates{~\cite{cadena2016dense} is the work closest to ours in literature. In~\cite{cadena2016dense}, Cadena~\textit{et~al.} investigated how to extract the subgraph whose total edge weight deviates from its expected total edge weight the most. The total edge weight is related to the density measures adopted in our work, average degree and graph affinity, since the total edge weight of a subgraph is the numerator of this subgraph's average degree and edge density, which is often regarded as the discrete version of graph affinity~\cite{liu2013fast,pavan2007dominant,wang2016tradeoffs}. However, these three measures are still quite different from each other. Thus, properties of the problem in~\cite{cadena2016dense} and our problems are very different. }

\section{Preliminaries}\label{sec:pre}
In this section, we introduce several essential concepts in our discussion and formulate the Density Contrast Subgraph problem. For readers' convenience, Table~\ref{tab:notation} lists the frequently used notations.

\begin{table}[t]
\centering
\begin{tabular}{|p{30mm}|p{50mm}|}
\hline
Notation & Description \\ \hline
$G=\langle V,E,A\rangle$ & An undirected and weighted graph, where each edge $(u,v) \in E$ is associated with a positive weight $A(u,v)$\\ \hline
$G(S)=\langle V,E(S),A(S) \rangle$    & The induced subgraph of $S$ in graph $G$  \\ \hline
$W(S)$    &  The total degree of $S$ in graph $G$. $W(S)=\sum_{(u,v) \in E(S)}{A(u,v)}$ \\ \hline
$\triangle^n$ & A simplex. $\triangle^n=\{\textbf{x} \mid \sum_{i=1}^n{x_i}=1, x_i \geq 0\}$ \\ \hline
$\textbf{x} \in \triangle^n$ & An embedding of a subgraph, $x_u$ denotes the participation of $u$ in this subgraph\\ \hline
$S_{\textbf{x}}$ & Support set of $\textbf{x}$. $S_{\textbf{x}}=\{u \mid x_u > 0\}$\\ \hline
$G_1=\langle V,E_1,A_1 \rangle$, $G_2=\langle V,E_2,A_2 \rangle$ & Inputs of our Density Contrast Subgraph problem \\ \hline
$G_D=\langle V,E_D,D \rangle$ & The difference graph between $G_2$ and $G_1$, where $D=A_2-A_1$ and $E_D=\{(u,v) \mid D(u,v) \neq 0 \}$\\ \hline
$G_{D^+}=\langle V,E_{D^+},D^+ \rangle$ & The ``positive'' part of $G_D$, where $D^+(i,j)=\max\{D(i,j),0\}$ and $E_{D^+}=\{(u,v) \mid D(u,v) > 0 \}$\\ \hline
$N_D(i)$ & The set of $i$'s neighbors in $G_D$. $N_D(i)=\{j \mid D(i,j) \neq 0\}$ \\ \hline
$W_D(i;G_D(S))$ & The degree of vertex $i$ in the induced subgraph $G_D(S)$. $W_D(i;G_D(S))=\sum_{j \in N_D(i) \cap S}{A(i,j)}$ \\ \hline
\end{tabular}
\caption{Frequently used notations.}
\label{tab:notation}
\end{table}

\subsection{Measures of Graph Density}
An undirected and weighted graph is represented by $G=\langle V,E,A \rangle$, where $V$ is a set of vertices, $E$ is a set of edges and $A$ is an affinity matrix. Since $G$ is undirected, if $(u,v) \in E$ then $(v,u) \in E$. Denote by $n=|V|$ the number of vertices and $m=|E|$ the number of edges. $A$ is an $n \times n$ symmetric matrix. The entry $A(u,v)>0$ denotes the weight of the edge $(u,v)$, and $A(u,v)=0$ if $(u,v) \notin E$. Given an undirected graph $G=\langle V,E,A \rangle$ and a subset of vertices $S$, the induced subgraph of $S$ is denoted by $G(S)=\langle S,E(S),A(S) \rangle$, where $E(S)=\{(u,v) \mid (u,v) \in E \wedge u \in S \wedge v \in S\}$, and $A(S)$ is a submatrix of $A$ so that only the row and columns of vertices in $S$ are present. 

Average degree is a widely investigated graph density measure. Given an undirected graph $G=\langle V,E,A \rangle$ and a set of vertices $S$, the total degree of the induced subgraph $G(S)$ is $W(S)=\sum_{(u,v) \in E(S)}{A(u,v)}$. The \textbf{average degree} of the induced subgraph $G(S)$ is defined by
\begin{equation}\label{eq:ad}
	\rho(S)=\frac{W(S)}{|S|}=\frac{1}{|S|}\sum_{u \in S}{\sum_{(u,v) \in E(S)}{A(u,v)}}=\frac{1}{|S|}\sum_{u \in S}{W(u;G(S))}
\end{equation}
where $W(u;G(S))=\sum_{(u,v) \in E(S)}{w(u,v)}$ is $u$'s degree in $G(S)$. 

Graph affinity is another popularly adopted graph density measure. In graph affinity, a subgraph is represented by a \textbf{subgraph embedding} in a standard simplex $\triangle^n=\{\textbf{x} \mid \sum_{i=1}^n{x_i}=1, x_i \geq 0\}$. For a subgraph embedding $\textbf{x}=[x_1,x_2,...,x_n]$, the entry $x_u$ indicates the participation importance of vertex $u$ in the subgraph. Denote by $S_{\textbf{x}}=\{u \mid x_u > 0\}$ the \textbf{support set} of $\textbf{x}$. The \textbf{graph affinity} density of a subgraph embedding $\textbf{x} \in \triangle^n$ is defined by
\begin{equation}\label{eq:ga}
	f(\textbf{x})=\textbf{x}^{\top}A\textbf{x}=\sum_{i=1}^{n}\sum_{j=1}^n{x_ix_jA(i,j)}=\sum_{(u,v) \in E_S}{x_ux_vA(u,v)}
\end{equation}

In traditional dense subgraph mining problems, when the density measure is average degree, the densest subgraph is often large in size~\cite{angel2012dense}, while if the density measure is graph affinity, the support set of the densest subgraph embedding is normally small~\cite{wang2016tradeoffs}.

\subsection{Mining Density Contrast Subgraph}
Given two undirected graphs $G_1=\langle V,E_1,A_1 \rangle$ and $G_2=\langle V,E_2,A_2 \rangle$, we want to find a subgraph such that its density in $G_1$ minus its density in $G_1$ is a large value. Similar to traditional dense subgraph mining, we are more interested in the subgraph whose density difference is the greatest among all subgraphs. Thus, the \textbf{Density Contrast Subgraph} (\textbf{DCS}) problem can be formulated as an optimization problem. Specifically, if the density measure is the average degree, the optimization problem is

\begin{equation}\label{eq:max_ad}
	\max_{S \subseteq V}{\rho_2(S)-\rho_1(S)=\frac{W_2(S)}{|S|}-\frac{W_1(S)}{|S|}}
\end{equation}
where $W_1(S)$ and $W_2(S)$ are total degrees of $S$ in $G_1$ and $G_2$, respectively. We call Eq.~\ref{eq:max_ad} the problem of \textbf{Density Contrast Subgraphs with respect to Average Degree} (\textbf{DCSAD}).

If we adopt graph affinity as the density measure, the optimization problem then becomes
\begin{equation}\label{eq:max_ga}
	\max_{\textbf{x} \in \triangle^n}{f_2(\textbf{x})-f_1(\textbf{x})=\textbf{x}^{\top}A_2\textbf{x}-\textbf{x}^{\top}A_1\textbf{x}}
\end{equation}
We call Eq.~\ref{eq:max_ga} the problem of \textbf{Density Contrast Subgraphs with respect to Graph Affinity} (\textbf{DCSGA}). \updates{Note that, similar to maximizing graph affinity in $G$, which is often used to maximize the edge density ($\frac{W(S)}{|S|^2}$)~\cite{wang2016tradeoffs}, we can also solve (\textbf{DCSGA}) for finding a subgraph whose edge density gap in $G_1$ and $G_2$ ($\frac{W_D(S)}{|S|^2}$) is maximized. To convert the solution $\textbf{x}$ to a set of vertices $S$, we just set $S=S_{\textbf{x}}$.}

It is easy to find that to find the subgraph such that the absolute value of its density difference is maximized, besides solving Eq.~\ref{eq:max_ad} or Eq.~\ref{eq:max_ga}, we also solve $\max_{S \subseteq V}{\rho_1(S)-\rho_2(S)}$ or $\max_{\textbf{x} \in \triangle^n}{f_1(\textbf{x})-f_2(\textbf{x})}$.

\begin{figure}
    \centering
    \includegraphics[width=.2\textwidth]{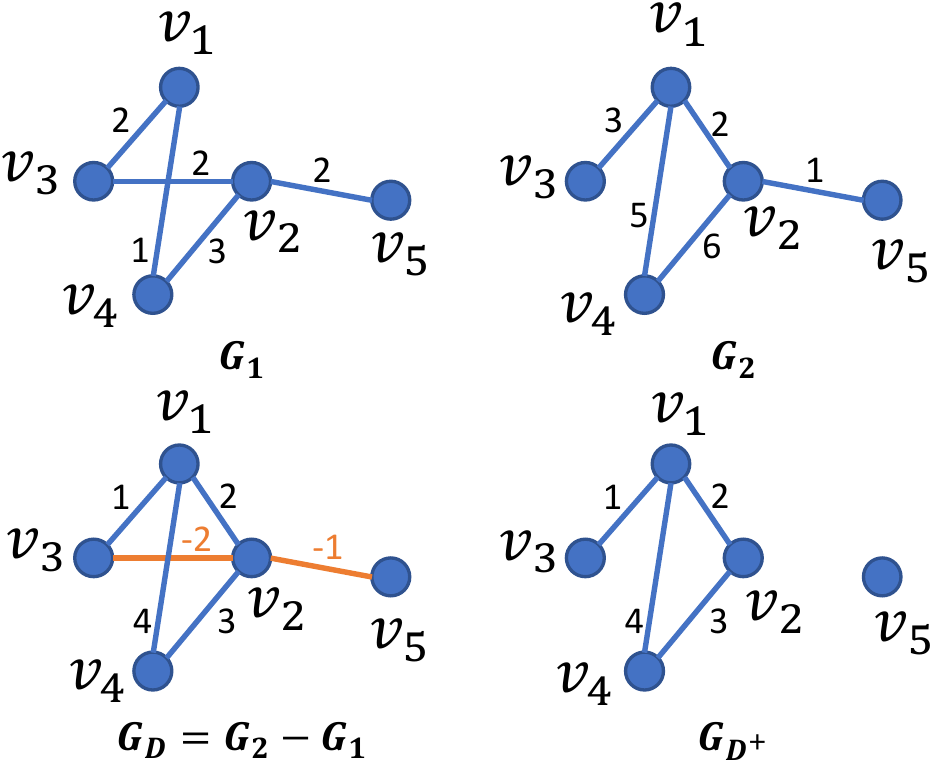}
    \caption{An Example of the Difference Graph}
    \label{fig:exp}
\end{figure}

A nice property that both Eq.~\ref{eq:max_ad} and Eq.~\ref{eq:max_ga} have is that the objective equals the density of $S$'s induced subgraph (or $\textbf{x}$ if using graph affinity as density) in a ``difference graph'' between $G_2$ and $G_1$. Given $G_1=\langle V,E_1,A_1 \rangle$ and $G_2=\langle V,E_2,A_2 \rangle$, the \textbf{difference graph} $G_D=\langle V,E_D,D \rangle$ is the graph associated with the affinity matrix $D=A_2-A_1$. Thus, $E_D=\{(u,v) \mid D(u,v) \neq 0 \}$. We also define the graph that contains only edges with positive  weights as $G_{D^+}=\langle V, E_{D^+}, D^+ \rangle$, where $E_{D^+}=\{(u,v) \mid D(u,v) > 0 \}$. Fig.~\ref{fig:exp} gives an example of $G_1$, $G_2$, $G_D$ and $G_{D^+}$.
It is easy to verify that Eq.~\ref{eq:max_ad} is equivalent to
\begin{equation}\label{eq:DCS_ad}
	\max_{S \subseteq V}{\rho_D(S)=\frac{W_D(S)}{|S|}}
\end{equation} 
where $W_D(S)$ is the total degree of $S$ in $G_D$. Also, Eq.~\ref{eq:max_ga} is equivalent to
\begin{equation}\label{eq:DCS_ga}
	\max_{\textbf{x} \in \triangle^n}{f_D(\textbf{x})=\textbf{x}^{\top}D\textbf{x}}
\end{equation}

The major difference between finding dense subgraphs in a difference graph $G_D$ and the traditional dense subgraph detection problems is that there are negative edge weights in a difference graph. In Sections~\ref{sec:ad} and~\ref{sec:ga} we will analyze how negative edge weights affect properties and algorithms of dense subgraph mining problems.

Also, from Eq.~\ref{eq:DCS_ad} and Eq.~\ref{eq:DCS_ga} we can see that the optimal value is positive if and only if the matrix $D$ has at least one positive entry, that is, the difference graph has at least one edge with potisive weight. If $D$ does not have positive entries, the optimal values to Eq.~\ref{eq:DCS_ad} and Eq.~\ref{eq:DCS_ga} are both 0, the optimal $S$ to Eq.~\ref{eq:DCS_ad} contains one single vertex, and the optimal $\textbf{x}$ to Eq.~\ref{eq:DCS_ga} has only one entry that equals 1 and all other entries 0.

\subsection{Why not Ratio of Difference?}

Instead of the absolute value of density difference, why don't we consider the ratio of density difference, i.e. $\frac{\rho_2(S)}{\rho_1(S)}$ or $\frac{f_2(\textbf{x})}{f_1(\textbf{x})}$, as the objective? The reason is that the ratio of density difference sometimes is not well-defined or has trivial solutions. Consider a single vertex $u$ as a subgraph. Its densities in $G_1$ and $G_2$ are both 0 so the ratio of density difference is $\frac{0}{0}$. Also, in Fig.~\ref{fig:exp}, the edge $(v_1,v_2)$ has density ratio $+\infty$ since it only appears in $G_2$ but not $G_1$.

\subsection{Generalization of the Difference Graph}
In Sections~\ref{sec:ad} and~\ref{sec:ga} we will introduce our DCS finding algorithms that can take any weighted graphs as input, where the weight of an edge can be positive or negative. Thus, the definition of the difference graph of $G_1=\langle V,E_1,A_1 \rangle$ and $G_2=\langle V,E_2,A_2 \rangle$ is not restricted to the graph whose affinity matrix is $A_2-A_1$. For example, we can set the difference graph as $G_D=\langle V,E_D,D=A_2-\alpha A_1\rangle$ and maximizing $\rho_D(S)$ (or $f_D(\textbf{x})$) is equivalent to finding $S$ (or $\textbf{x}$) such that $\rho_2(S) \geq \alpha \rho_1(S)$ (or $f_2(\textbf{x}) \geq \alpha f_1(\textbf{x})$), and $\rho_2(S)-\alpha \rho_2(S)$ (or $f_2(\textbf{x})-\alpha f_1(\textbf{x})$) is maximized. This is similar to the optimal $\alpha$-quasi-clique problem~\cite{tsourakakis2013denser}. Also, when there is one edge in the difference graph whose weight is much heavier than all the other edges, such an edge itself is very possible to be the optimal subgraph. To avoid this, for edges with too heavy weights in $G_D$, we can adjust their weights such that they are not too much heavier than other edge weights in $G_D$. Then the DCS extracted usually will become larger in size.  

\section{DCS with respect to Average Degree}\label{sec:ad}
In this section, we first explore some key properties of the \textbf{DCSAD} problem.
Then, we devise an efficient greedy algorithm with \updates{a data-dependent} ratio.

\subsection{Complexity and Approximability}
Like traditional dense subgraph discovery problem, the \textbf{DCSAD} prefers ``connected'' subgraphs, of course, in the difference graph $G_D$. 

\begin{property}\label{prp:connect}
Let $G_D$ be the difference graph of $G_1$ and $G_2$. For any $S \subseteq V$, if $G_D(S)$ is not a connected subgraph, then there exists a set $S' \subseteq S$ such that $G_D(S')$ is connected and the density difference $\rho_D(S') \geq \rho_D(S)$.
\end{property}
\begin{proof}
	 Without loss of generality, we assume $G_D(S)$ has two connected components $G_D(S_1)$ and $G_D(S_2)$, where $S_1 \cup S_2=S$ and $S_1 \cap S_2=\emptyset$. Clearly, $W_D(S)=W_D(S_1)+W_D(S_2)$ because $G_D(S_1)$ and $G_D(S_2)$ are isolated. So we have $\rho_D(S)=\frac{W_D(S)}{|S|}=\frac{|S_1|}{|S|}\rho_D(S_1)+\frac{|S_2|}{|S|}\rho_D(S_2)$, which means $\rho_D(S)$ is a convex combination of $\rho_D(S_1)$ and $\rho_D(S_2)$. Thus, $\rho(S)=\frac{W_D(S)}{|S|} \leq \max\{\rho_D(S_1),\rho_D(S_2)\}$
\end{proof}

Traditional dense subgraph discovery with respect to average degree can be solved in polynomial time~\cite{goldberg1984finding}, and has an efficient 2-approximation algorithm~\cite{charikar2000greedy}. Unfortunately, our problem does not have the same computational properties.

\begin{theorem}\label{th:ad_hard}
The \textbf{DCSAD} (Eq.~\ref{eq:DCS_ad}) problem is NP-hard.
\end{theorem}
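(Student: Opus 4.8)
The plan is to prove NP-hardness by a polynomial-time many-one reduction from the classical \textsc{Maximum Clique} problem, whose decision version --- given an undirected graph $G=\langle V,E\rangle$ and an integer $k$, does $G$ contain a clique of size at least $k$? --- is NP-complete. This choice of source problem is deliberate: \textsc{Maximum Clique} is hard to approximate within a factor of $n^{1-\epsilon}$, so the same construction, made gap-preserving, will also be the natural vehicle for the $O(n^{1-\epsilon})$ inapproximability claimed later in Section~\ref{sec:ad}.

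Given an instance $(G,k)$ of \textsc{Maximum Clique} with $n=|V|$, I would build a difference graph $G_D=\langle V,E_D,D\rangle$ on the \emph{same} vertex set by setting $D(u,v)=1$ for every pair with $(u,v)\in E$ and $D(u,v)=-M$ for every pair with $u\neq v$ and $(u,v)\notin E$, where $M$ is a polynomially bounded penalty, e.g.\ $M=n^2$. To keep this faithful to the original DCS formulation one can take $D=A_2-A_1$ with the non-negative matrix $A_2$ carrying the unit weights on $E$ and the non-negative matrix $A_1$ carrying the weights $M$ on the non-edges, so this is a legitimate DCS input; the construction is clearly polynomial since all weights are polynomially bounded.

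The core of the argument is to show that the optimum of Eq.~\ref{eq:DCS_ad} on this $G_D$ equals $\max\{(\omega(G)-1)/2,\,0\}$, where $\omega(G)$ denotes the clique number of $G$. On the positive side, a single vertex has density $0$, and a clique $S$ of $G$ of size $s\geq 2$ induces $W_D(S)=\binom{s}{2}$, hence $\rho_D(S)=(s-1)/2$, which is positive and strictly increasing in $s$; so whenever $G$ has an edge, some maximum clique attains density $(\omega(G)-1)/2$. Conversely, if $S$ is \emph{not} a clique it misses at least one internal pair, so $W_D(S)\leq \binom{|S|}{2}-M$ and therefore $\rho_D(S)\leq \frac{|S|-1}{2}-\frac{M}{|S|}<\frac{n}{2}-\frac{M}{n}\leq 0$ by the choice $M=n^2$; thus no non-clique beats the trivial value $0$. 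This establishes the characterization of the optimum, and consequently, for $k\geq 2$, $G$ has a clique of size at least $k$ if and only if the optimum $\rho_D^\ast\geq (k-1)/2$ (the cases $k\leq 1$ being trivially handled). Mapping $(G,k)\mapsto(G_D,(k-1)/2)$ together with the decision question ``is there $S$ with $\rho_D(S)\geq (k-1)/2$?'' is therefore the desired reduction, and \textbf{DCSAD} is NP-hard. (Property~\ref{prp:connect} is not needed for the reduction itself, but it confirms that the optimal $S$ may be taken connected in $G_D$, matching the clique picture.)

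The only place that requires care --- the ``main obstacle'' --- is the penalty bookkeeping: one must check that a single fixed polynomial $M$ simultaneously forces every non-clique to have negative density (so it is never selected), keeps the encoding size polynomial, and preserves the monotonicity ``larger clique $\Rightarrow$ strictly larger density'' so that the optimum pinpoints $\omega(G)$ rather than merely certifying the existence of some clique. The estimate above shows $M=n^2$ suffices; beyond that one should just sanity-check the degenerate cases ($G$ edgeless, $|S|=2$, small $k$), all of which are immediate. Everything else in the argument is routine.
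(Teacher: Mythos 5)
Your reduction is essentially the paper's own: both penalize the non-edges of the clique instance with a large negative weight (you use $M=n^2$, the paper uses $|E|+1$, which already exceeds the total positive weight any $S$ could collect) so that an optimal set must induce a clique, whose size is then read off from the optimal density. The only discrepancy is notational --- the paper's $W(S)$ sums over ordered pairs, so a size-$s$ unit-weight clique has average degree $s-1$ rather than your $(s-1)/2$ --- and this factor of two leaves your argument and the decision threshold mapping unaffected.
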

\begin{proof}
We prove this by a reduction from the maximum clique problem, which is known NP-hard. Given an instance of the maximum clique problem, which is an undirected and unweighted graph $G=\langle V,E \rangle$, we build two graphs $G_1$ and $G_2$ as the input of the \textbf{DCSAD} problem. Let $E_1=\{(u,v) \mid (u,v) \in V \times V \wedge u \neq v \wedge (u,v) \notin E\}$. We set $G_1=\langle V,E_1,A_1 \rangle$ and for every edge $(u,v) \in E_1$, we set the weight $A_1(u,v)=|E|+1$. Clearly building $G_1$ and $G_2$ can be done in polynomial time w.r.t.\ the size of $G$. We set $G_2=\langle V,E_2,A_2 \rangle$ where $E_2=E$. For every edge $(u,v) \in E_2$, we set the weight $A_2(u,v)=1$. 

It is obvious that for any $S \subseteq V$, the density difference ${\frac{W_2(S)}{|S|}-\frac{W_1(S)}{|S|}} < 0$ if $G_1(S)$, the induced subgraph of $S$ in $G_1$, contains at least one edge in $E_1$. Thus, the optimal $S$ must satisfy that $G_1(S)$ does not contain any edges in $E_1$. Due to the definition of $E_1$, $G_2(S)$, the induced subgraph of $S$ in $G_2$ is a clique. So the optimal density difference is $|S|-1$ where $S$ is the maximum clique in $G_2$. Because $G_2$ and $G$ actually are the same, the optimal density difference of $G_2$ and $G_1$ is at least $k-1$ if and only if $G$ contains a clique with at least $k$ vertices. Due to the NP-hardness of the maximum clique problem, the \textbf{DCSAD} problem is also NP-hard.
\end{proof}

The \textbf{DCSAD} problem is not only NP-hard but also hard to approximate under reasonable complexity assumptions.

\begin{corollary}\label{th:ad_approx}
Assuming \textbf{P}$\neq$\textbf{NP}, the \textbf{DCSAD} problem (Eq.~\ref{eq:DCS_ad}) cannot be approximated within $O(n^{1-\epsilon})$ for any $\epsilon > 0$.
\end{corollary}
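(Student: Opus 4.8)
The plan is to turn the reduction from the proof of Theorem~\ref{th:ad_hard} into an approximation-preserving one and then invoke the near-optimal inapproximability of maximum clique; in fact almost no extra work is needed, since that reduction is already essentially gap-preserving. Recall its two relevant features. It sends an $n$-vertex instance $G=\langle V,E\rangle$ of maximum clique to a \textbf{DCSAD} instance $(G_1,G_2)$ on the \emph{same} vertex set, hence on the same number $n$ of vertices, such that: (i) any $S\subseteq V$ with $\rho_D(S)>0$ induces an edge-free subgraph of $G_1$, equivalently $S$ is a clique of $G=G_2$, and for such $S$ one has $\rho_D(S)=|S|-1$; and (ii) the optimum of Eq.~\ref{eq:DCS_ad} on $(G_1,G_2)$ equals $\omega(G)-1$, where $\omega(G)$ is the clique number of $G$.

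Assume, for contradiction, that for some fixed $\epsilon>0$ there is a polynomial-time algorithm $\mathcal{A}$ approximating \textbf{DCSAD} within a factor $\alpha(n)=O(n^{1-\epsilon})$ (so in particular $\alpha(n)\ge 1$, and $\mathcal{A}$ outputs a witnessing set). Given a clique instance $G$ on $n$ vertices, first test whether $E=\emptyset$; if so, $\omega(G)=1$ and we return a single vertex. Otherwise $\omega(G)\ge 2$, so by (ii) the \textbf{DCSAD} optimum on $(G_1,G_2)$ is $\omega(G)-1\ge 1>0$, and running $\mathcal{A}$ on $(G_1,G_2)$ returns $S$ with $\rho_D(S)\ge(\omega(G)-1)/\alpha(n)>0$. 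By (i), $S$ is a clique of $G$ and $\rho_D(S)=|S|-1$, so, using $\alpha(n)\ge 1$,
\[
 |S| \;=\; \rho_D(S)+1 \;\ge\; \frac{\omega(G)-1}{\alpha(n)}+1 \;\ge\; \frac{\omega(G)-1}{\alpha(n)}+\frac{1}{\alpha(n)} \;=\; \frac{\omega(G)}{\alpha(n)} .
\]
Thus the reduction composed with $\mathcal{A}$ is a polynomial-time algorithm approximating maximum clique within $\alpha(n)=O(n^{1-\epsilon})$; for all sufficiently large $n$ this is at most $n^{1-\epsilon/2}$, contradicting the classical result of H{\aa}stad (derandomized by Zuckerman) that, unless $\textbf{P}=\textbf{NP}$, maximum clique has no polynomial-time $n^{1-\delta}$-approximation for any $\delta>0$. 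Hence no such $\mathcal{A}$ can exist, establishing the corollary for every $\epsilon>0$.

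I do not expect a real obstacle here — the heavy lifting is done by the clique inapproximability theorem, and the reduction of Theorem~\ref{th:ad_hard} already transfers gaps faithfully (no vertex blow-up, clique size $k \leftrightarrow$ density contrast $k-1$). The only fussy point is the additive ``$-1$'' separating the \textbf{DCSAD} optimum from the clique number, which I dispose of by handling $\omega(G)=1$ (i.e.\ $E=\emptyset$) as a trivial special case and noting that for $\omega(G)\ge 2$ the shift is absorbed for free once $\alpha(n)\ge 1$; the mismatch between an $O(n^{1-\epsilon})$ bound and the bare exponent $1-\delta$ in the cited theorem is likewise harmless, handled by passing from $\epsilon$ to $\epsilon/2$. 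If one wished to avoid quoting the strongest clique hardness, the same reduction combined with the weaker PCP-based statement that clique cannot be approximated within any fixed constant already rules out constant-factor approximations for \textbf{DCSAD}.
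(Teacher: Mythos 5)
Your proof is correct and follows essentially the same route as the paper's: reuse the reduction from Theorem~\ref{th:ad_hard}, observe that a $\beta$-approximate DCSAD solution of value $k'-1$ yields a $k'$-clique with $k/k' \le (k-1)/(k'-1) \le \beta$, and invoke the $O(n^{1-\epsilon})$ inapproximability of maximum clique. Your version is just slightly more careful about the degenerate case $\omega(G)=1$ and the $\epsilon$ versus $\epsilon/2$ bookkeeping, which the paper glosses over.
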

\begin{proof}
We still use our reduction in the proof of Theorem~\ref{th:ad_hard}. We already proved that the optimal density difference is $k-1$ where $k$ is the size of the maximum clique in $G$. Also, it is easy to see that if a \textbf{DCSAD} algorithm returns a value $k'-1$ such that $\frac{k-1}{k'-1} \leq \beta$, there is a $k'$-clique in $G$. Since $k \geq k'$, $\frac{k}{k'} \leq \frac{k-1}{k'-1} \leq \beta$. Thus, if \textbf{DCSAD} can be approximated within $\beta$, so is the maximum clique problem.

It is known that the maximum clique problem cannot be approximated within $O(n^{1-\epsilon})$ for any $\epsilon > 0$, assuming \textbf{P}$\neq$\textbf{NP}. Thus, if \textbf{P}$\neq$\textbf{NP}, the \textbf{DCSAD} problem (Eq.~\ref{eq:max_ad}) cannot be approximated within $O(n^{1-\epsilon})$ for any $\epsilon > 0$.
\end{proof}

\subsection{Greedy Algorithms}

\nop{\todo{This section is hard to follow.  It may be good if we can briefly describe the intuition of the major ideas informally before we jump into the detailed algorithms. Your algorithm names carry a prefix ``New''.  One may wonder what this ``New'' means.  You may want to either clarify or remove the word.}}

Although \textbf{DCSAD} cannot be approximated within $O(n^{1-\epsilon})$, an $O(n)$ approximation is easy to achieve. We have two cases,
\begin{enumerate}
    \item If there are no edges with positive weights in $G_D$, apparently any $S$ that only contains a single vertex is an optimal solution to the \textbf{DCSAD} problem, and the optimal density difference is 0.
    \item If $G_D$ has at least one edge with positive weight, $S=\{u,v\}$ is an $O(n)$ approximation solution, where $(u,v)=\arg\max_{(u,v) \in E_D}{D(u,v)}$. The reason is as follows. For any $S' \subseteq V$, $\rho_D(S')$ must be no greater than the density of an $n$-clique where every edge's weight is $D(u,v)$. Such an $n$-clique has density $(n-1)D(u,v)$. Note that $\rho_D(S)=D(u,v)$. Thus, $\frac{\rho_D(S)}{\max_{S' \subseteq V}{\rho_D(S')}} \geq n-1 = O(n)$.
\end{enumerate}

Utilizing the above results, and inspired by the greedy approximation algorithm (shown in Algorithm~\ref{alg:greedy}) for the traditional dense subragph discovery problem~\cite{charikar2000greedy}, we devise an $O(n)$ approximation algorithm, the DCSGreedy algorithm (Algorithm~\ref{alg:ad}), which also has \updates{a data-dependent} ratio.

The idea of the Algorithm~\ref{alg:ad} is to generate multiple potentially good solutions and pick the best one. As discussed above, when $G_D$ has positive weighted edges, the edge $(u,v)$ with the maximum weight is a candidate solution since it is $\frac{1}{n-1}$-optimal. The Greedy algorithm may also generate a good solution, although for the \textbf{DCSAD} problem its approximation ratio is no better than $O(n^{1-\epsilon})$ for any $\epsilon>0$. Thus, we run Algorithm~\ref{alg:greedy} on $G_D$ to generate $S_1$. We also run Algorithm~\ref{alg:greedy} on $G_{D^+}$ to get $S_2$, because not only $S_2$ may be a better solution, but also $\rho_{D^+}(S_2)$, the average degree of $S_2$ in $G_{D^+}$, helps us derive \updates{a data-dependent} ratio of Algorithm~\ref{alg:ad}, which will be shown in Theorem~\ref{th:online_ratio}. In line~\ref{line:cc} of Algorithm~\ref{alg:ad}, $CC_D(S)$ is the set of connected components of $G_D(S)$, where a connected component is represented by a set of vertices. Line~\ref{line:cc} is for refining the solution $S$ obtained at line~\ref{line:max} when $G_D(S)$ is not connected, since \textbf{DCSAD} prefers ``connected'' subgraphs.

\begin{theorem}\label{th:online_ratio}
	The $S$ returned by Algorithm~\ref{alg:ad} has \updates{a data-dependent} ratio of $\frac{2\rho_{D^+}(S_2)}{\rho_D(S)}$, where $S_2$ is the set in line~\ref{line:greedy2} of Algorithm~\ref{alg:ad}.
\end{theorem}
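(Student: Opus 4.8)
The plan is to reduce the statement to a single numerator estimate, namely $\mathrm{OPT} \le 2\rho_{D^+}(S_2)$ where $\mathrm{OPT} := \max_{S'\subseteq V}\rho_D(S')$, and then simply divide through by $\rho_D(S)$. If $G_D$ has no edge of positive weight the claim is trivial ($\mathrm{OPT}=0$ and the value returned is $0$), so I assume $G_D$ has at least one positive-weight edge. Then the candidate $\{u,v\}$ with $(u,v)=\arg\max_{(u,v)\in E_D}D(u,v)$ that Algorithm~\ref{alg:ad} considers already guarantees $\rho_D(S)\ge D(u,v)>0$, so the quantity $\frac{2\rho_{D^+}(S_2)}{\rho_D(S)}$ is well defined and the ratio $\frac{\mathrm{OPT}}{\rho_D(S)}$ makes sense.

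First I would compare $G_D$ with its positive part $G_{D^+}$. Since $D^+(i,j)=\max\{D(i,j),0\}\ge D(i,j)$ for every pair $i,j$, for any $S'\subseteq V$ we get $W_{D^+}(S')=\sum_{(u,v)\in E(S')}D^+(u,v)\ge\sum_{(u,v)\in E(S')}D(u,v)=W_D(S')$, hence $\rho_{D^+}(S')\ge\rho_D(S')$. Taking the maximum over all $S'$ yields $\mathrm{OPT}\le\max_{S'\subseteq V}\rho_{D^+}(S')$.

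Next I would invoke the classical analysis of the peeling procedure (Algorithm~\ref{alg:greedy}): on a graph whose edge weights are all nonnegative, it outputs a subgraph whose average degree is at least half the optimum, i.e.\ it is Charikar's $2$-approximation~\cite{charikar2000greedy}. Because $G_{D^+}$ has only nonnegative weights, the set $S_2$ returned by running Algorithm~\ref{alg:greedy} on $G_{D^+}$ satisfies $\rho_{D^+}(S_2)\ge\tfrac12\max_{S'\subseteq V}\rho_{D^+}(S')$. Chaining the two inequalities gives $\mathrm{OPT}\le\max_{S'\subseteq V}\rho_{D^+}(S')\le 2\rho_{D^+}(S_2)$, and dividing by $\rho_D(S)>0$ produces $\frac{\mathrm{OPT}}{\rho_D(S)}\le\frac{2\rho_{D^+}(S_2)}{\rho_D(S)}$, which is exactly the claimed data-dependent ratio.

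The only real subtlety — and the step I would be most careful about — is confirming that the $2$-approximation guarantee of the peeling algorithm genuinely transfers to the \emph{weighted} graph $G_{D^+}$; it does, since Charikar's bound is stated for arbitrary nonnegative edge weights. Everything else is benign: the connected-component refinement in line~\ref{line:cc} and the final $\max$ over candidates can only increase $\rho_D(S)$ (using Property~\ref{prp:connect} for the former), and in any case the numerator estimate $\mathrm{OPT}\le 2\rho_{D^+}(S_2)$ depends only on $S_2$, not on which candidate is ultimately returned, so the argument is robust to the precise selection rule.
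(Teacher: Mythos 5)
Your proof is correct and follows essentially the same route as the paper's: bound $\max_{S'}\rho_D(S')$ by $\max_{S'}\rho_{D^+}(S')$ using $D^+\ge D$ entrywise, then apply Charikar's $2$-approximation guarantee for the peeling algorithm on the nonnegatively weighted graph $G_{D^+}$ to get $\max_{S'}\rho_{D^+}(S')\le 2\rho_{D^+}(S_2)$. Your version is simply more explicit about well-definedness and the weighted-graph applicability of Charikar's bound, which the paper takes for granted.
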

\begin{proof}
	It is known that $\rho_{D^+}(S_2)$ is a 2-approximation of the maximum density in $G_{D^+}$~\cite{charikar2000greedy}. For any $S' \subseteq V$, clearly $\rho_D(S') \leq \rho_{D^+}(S')$. Thus, the maximum density in $G_D$ is at most $2\rho_{D^+}(S_2)$ and the $S$ returned by Algorithm~\ref{alg:ad} has \updates{a data-dependent} ratio of $\frac{2\rho_{D^+}(S_2)}{\rho_D(S)}$.
\end{proof}

\nop{
We analyze the time complexity of Algorithm~\ref{alg:ad}. Suppose $|V|=n$, $|E_1|=m_1$ and $|E_2|=m_2$. To build the difference graph $G_D$, we first sort adjacency lists of $G_1$ and $G_2$, which can be done in $O((m_1+m_2)\log{n}+n)$ time. Then for a vertex $u$, we use a merge sort to build its adjacency list in $G_D$ in $O(|N_1(u)|+|N_2(u)|)$ time, where $N_1(u)$ and $N_2(u)$ are the sets of $u$'s neighbors in $G_1$ and $G_2$, respectively. Thus, building $G_D$ can be finished in $O((m_1+m_2)\log{n}+n)$ time. Finding the maximum edge weight can be done in $O(m_1+m_2)$ time since $G_D$ has at most $m_1+m_2$ edges. For executing $\textbf{Greedy}(G_D)$, the bottleneck is Line~\ref{line:pick} of Algorithm~\ref{alg:greedy}, picking the vertex with smallest degree. Since $G_D$ is a weighted graph and edges weights are normally not uniform, the data structure in~\cite{charikar2000greedy} does not apply here. We adopt a segment tree~\cite{bentley1977solutions} to store the current degrees of vertices in $S_1$. The initialization of the segment tree takes $O(n\log{n})$ time. Then the vertex $i$ that has the smallest degree can be retrieved in $O(1)$ time. When removing $i$ from $S_1$, we update the degrees of the vertices still in $S_1$ that are affected by removing $i$ and the segment tree. This can be done in $O(|N_D(i)|\log{n})$ time, where $|N_D(i)|$ is the set of $i$'s neighbors in $G_D$. So $\textbf{Greedy}(G_D)$ can be finished in $O((m_1+m_2+n)\log{n})$ time. Apparently $\textbf{Greedy}(G_{D^+})$ can also be done in $O((m_1+m_2+n)\log{n})$ time because $G_{D^+}$ has fewer edges than $G_D$. Lines~\ref{line:cc_start} and~\ref{line:cc} obviously can be done in $O(m_1+m_2+n)$ time. Thus, in total Algorithm~\ref{alg:ad} can be efficiently implemented in $O((m_1+m_2+n)\log{n})$ time.}

We analyze the time complexity of Algorithm~\ref{alg:ad}. Suppose $|V|=n$, $|E_1|=m_1$ and $|E_2|=m_2$. The difference graph $G_D$ can be built in $O((m_1+m_2)\log{n}+n)$ time, if we sort the adjacent lists of $G_1$ and $G_2$ first, and use then a merge sort to build $u$'s adjacent list in $G_D$ for each $u \in V$. Finding the maximum edge weight can be done in $O(m_1+m_2)$ time since $G_D$ has at most $m_1+m_2$ edges. Running the Greedy algorithm on a graph $G=\langle V,E,A \rangle$ can be finished in $O((|E|+|V|)\log{|V|})$ time, if we adopt a segment tree~\cite{bentley1977solutions} to store the current degrees of vertices in $S_1$. Thus, $\textbf{Greedy}(G_D)$ and $\textbf{Greedy}(G_{D^+})$ together can be done in $O((m_1+m_2+n)\log{n})$ time. Lines~\ref{line:cc_start} and~\ref{line:cc} obviously can be done in $O(m_1+m_2+n)$ time. Thus, in total Algorithm~\ref{alg:ad} can be efficiently implemented in $O((m_1+m_2+n)\log{n})$ time.

\begin{algorithm}[t]
\TitleOfAlgo{\textbf{Greedy}}
\caption{Greedy Algorithm.}
\label{alg:greedy}
\KwIn{$G=\langle V,E,A \rangle$}
\KwOut{$S$}
\begin{algorithmic}[1]
\STATE $S \leftarrow V$, $S_1 \leftarrow V$
\WHILE {$|S_1| \geq 1$}
    \IF {$\frac{W(S_1)}{|S_1|} > \frac{W(S)}{|S|}$}
        \STATE $S \leftarrow S_1$
    \ENDIF
    \STATE $i \leftarrow \arg\min_{j \in S_1}{W(j;G(S_1))}$ \label{line:pick}
    \STATE $S_1 \leftarrow S_1 \setminus \{i\}$
\ENDWHILE
\RETURN $S$
\end{algorithmic}
\end{algorithm}

\begin{algorithm}[t]
\TitleOfAlgo{\textbf{DCSGreedy}}
\caption{DCSGreedy algorithm for solving \textbf{DCSAD}.}
\label{alg:ad}
\KwIn{$G_1=\langle V,E_1,A_1 \rangle$, $G_2=\langle V,E_2,A_2 \rangle$}
\KwOut{$S$, and \updates{a data-dependent} ratio $\beta$}
\begin{algorithmic}[1]
\STATE Build the difference graph $G_D=\langle V,E_D \rangle$
\IF {$G_D$ does not have edges with positive weights}
    \STATE Randomly pick a vertex $v$
    \RETURN $S \leftarrow \{v\}$ 
\ENDIF
\STATE $(u,v) \leftarrow \arg\max_{(u,v) \in E_D}{D(u,v)}$
\STATE $S \leftarrow \{u,v\}$, $S_1 \leftarrow \textbf{Greedy}(G_D)$ \label{line:greedy1}, $S_2 \leftarrow \textbf{Greedy}(G_{D^+})$ \label{line:greedy2}
\STATE $S \leftarrow \arg\max_{S' \in \{S, S_1, S_2\}}\frac{W_D(S')}{|S'|}$ \label{line:max}
\IF {$G_D(S)$ is not connected}\label{line:cc_start}
    \STATE $S \leftarrow \arg\max_{S' \in CC_D(S)}{\frac{W_D(S')}{|S'|}}$ \label{line:cc} 
\ENDIF
\RETURN $S$ and $\beta \leftarrow \frac{2\rho_{D^+}(S_2)}{\rho_D(S)}$
\end{algorithmic}
\end{algorithm}

\section{DCS with respect to Graph Affinity}\label{sec:ga}
In this section, we first explore several properties of the \textbf{DCSGA} problem. Then, we devise a Coordinate-Descent algorithm which is guaranteed to converge to a KKT point. We also propose a refinement step to further improve a KKT point solution. Since \textbf{DCSGA} is non-concave, normally we need multiple initializations to find a good solution. To reduce the number of initializations, we utilize a smart initialization heuristic. Combining the Coordinate-Descent algorithm, the refinement step and the smart initialization heuristic together, we have our NewSEA algorithm for the \textbf{DCSGA} problem. 

\subsection{Properties}
We first show that, like the \textbf{DCSAD} problem, \textbf{DCSGA} also prefers connected subgraphs in the difference graph $G_D$.

\begin{property}
Let $G_D=\langle V,E_D,D \rangle$ be the difference graph of $G_1$ and $G_2$. For any $\textbf{x} \in \triangle^n$ such that $f_D(\textbf{x})=\textbf{x}^\top D \textbf{x} \geq 0$, if $G_D(S_{\textbf{x}})$ is not connected, where $S_{\textbf{x}}$ is the support set of $\textbf{x}$, then there exists $\textbf{x}'$ whose support set $S_{\textbf{x}'} \subseteq S_{\textbf{x}}$, and $G_D(S_{\textbf{x}'})$ is connected, and $f_D(\textbf{x}') \geq f_D(\textbf{x})$.  
\end{property}
\begin{proof}
Without loss of generality, we assume $G_D(S)$ has two connected components $G_D(S_1)$ and $G_D(S_2)$, where $S_1 \cup S_2=S$ and $S_1 \cap S_2=\emptyset$. We decompose $\textbf{x}$ such that $\textbf{x}=\textbf{y}+\textbf{z}$, where $S_{\textbf{x}}=S_1$ and $S_{\textbf{y}}=S_2$. Because $S_1$ and $S_2$ are two connected components in $G_D(S)$, we have $\textbf{y}^\top D \textbf{z}=0$. Thus, $\textbf{x}^\top D \textbf{x}=(\textbf{y}+\textbf{z})^\top D (\textbf{y}+\textbf{z}) = \textbf{y}^\top D \textbf{y} + \textbf{z}^\top D \textbf{z}$. Let $\textbf{y}'=\frac{\textbf{y}}{|\textbf{y}|_1}$ and $\textbf{z}'=\frac{\textbf{z}}{|\textbf{z}|_1}$. Clearly, $\textbf{y}' \in \triangle^n$ and $\textbf{z}' \in \triangle^n$. So both $\textbf{y}'$ and $\textbf{z}'$ are subgraph embeddings. Thus, we have$f_D(\textbf{x})=\textbf{x}^\top D \textbf{x}=|\textbf{y}|_1^2f_D(\textbf{y}')+|\textbf{z}|_1^2f_D(\textbf{z}')$. Since $\textbf{x}^\top D \textbf{x} \geq 0$ and both $|\textbf{y}|_1^2$ and $|\textbf{z}|_1^2$ are non-negative, $\max\{f_D(\textbf{y}'),f_D(\textbf{z}')\} \geq 0$. Also $|\textbf{y}|_1+|\textbf{z}|_1=|\textbf{x}|_1=1$, so $|\textbf{y}|_1^2+|\textbf{z}|_1^2 \leq 1$. We get that $f_D(\textbf{x}) \leq (|\textbf{y}|_1^2+|\textbf{z}|_1^2)\max\{f_D(\textbf{y}'),f_D(\textbf{z}')\} \leq \max\{f_D(\textbf{y}'),f_D(\textbf{z}')\}$.
\end{proof}

The \textbf{DCSGA} is a standard Quadratic Programming (QP) problem, which in general is NP-hard. We prove that \textbf{DCSGA} is NP-hard. 

\begin{theorem}
The \textbf{DCSGA} (Eq.~\ref{eq:DCS_ga}) problem is NP-hard.
\end{theorem}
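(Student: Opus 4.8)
The plan is to reduce the maximum clique problem to \textbf{DCSGA} via the classical Motzkin--Straus theorem~\cite{motzkin1965maxima}, which relates the clique number of a graph to the maximum of its associated quadratic form on the simplex. Given an instance of maximum clique, i.e.\ an undirected unweighted graph $G=\langle V,E\rangle$ with $n=|V|$, I would construct a \textbf{DCSGA} instance whose difference graph is exactly $G$. Concretely, set $G_1=\langle V,\emptyset,\mathbf{0}\rangle$, the edgeless graph on $V$ (which has an empty positive-weight edge set and is therefore a legitimate input), and $G_2=\langle V,E,A_2\rangle$ with $A_2(u,v)=1$ for every $(u,v)\in E$. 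This construction takes $O(n^2)$ time, and $D=A_2-A_1=A_2$ is precisely the $0/1$ adjacency matrix of $G$, so on this instance \textbf{DCSGA} is exactly $\max_{\textbf{x}\in\triangle^n}\textbf{x}^\top A_2\textbf{x}$.

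Next I would invoke the Motzkin--Straus theorem: for the adjacency matrix $A_G$ of a graph with clique number $\omega(G)$,
\begin{equation*}
\max_{\textbf{x}\in\triangle^n}\textbf{x}^\top A_G\textbf{x}=1-\frac{1}{\omega(G)}
\end{equation*}
(with a factor of $\tfrac12$ on the right-hand side under the convention that the quadratic form sums over each undirected edge only once; the exact constant is immaterial, what matters is that the optimum is a strictly increasing function of $\omega(G)$). Hence the optimal value $v^\star$ of the constructed \textbf{DCSGA} instance determines $\omega(G)$ exactly via $\omega(G)=1/(1-v^\star)$, and, more directly, the decision question ``is the \textbf{DCSGA} optimum at least $1-1/k$?'' is equivalent to ``does $G$ contain a clique of size $k$?''. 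Since $\omega(G)\in\{1,\dots,n\}$, this gives a polynomial-time reduction from maximum clique, so a polynomial-time algorithm (even just for the decision version) for \textbf{DCSGA} would place maximum clique in \textbf{P}. As maximum clique is NP-hard, \textbf{DCSGA} is NP-hard.

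I do not expect a genuinely hard step here; the only points requiring care are (i) fixing the normalization in the Motzkin--Straus identity so that the claimed optimal value and the recovery formula for $\omega(G)$ are stated consistently with the definition of $f(\textbf{x})=\textbf{x}^\top D\textbf{x}$ in Eq.~\ref{eq:ga}, and (ii) verifying that the edgeless $G_1$ is an admissible input so that $D=A_2$ really is the bare adjacency matrix. An alternative reduction that keeps the difference graph genuinely mixed-sign --- placing a large negative weight on every non-edge of $G$ through $G_1$, in the spirit of the proof of Theorem~\ref{th:ad_hard} --- is also available, but it would additionally require proving a ``penalized'' Motzkin--Straus statement showing that, once the non-edge penalty is large enough, the maximizer's support must be a clique; the edgeless construction sidesteps this extra argument entirely.
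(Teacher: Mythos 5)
Your proposal is correct and is essentially identical to the paper's own proof: both take $G_1$ to be the edgeless graph and $G_2=G$, so that $D$ is the $0/1$ adjacency matrix of $G$, and both invoke the Motzkin--Straus theorem to conclude that solving \textbf{DCSGA} on this instance determines the clique number of $G$. The only addition in your write-up is the (correct) caution about the normalization constant in the Motzkin--Straus identity, which does not change the argument.
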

\begin{proof}
Consider an undirected and unweighted graph $G$ whose adjacency matrix is $A$, where the entries of $A$ are either 0 or 1. It is known that maximizing $\textbf{x}^\top A \textbf{x}$ s.t. $\textbf{x} \in \triangle^n$ is NP-hard, because the optimum is $1-\frac{1}{k}$, where $k$ is the size of the maximum clique of $G$~\cite{motzkin1965maxima}. Given an arbitrary undirected and unweighted graph $G$, we create a corresponding instance of the \textbf{DCSGA} problem by building $G_1$ as a graph without any edges and setting $G_2=G$. Clearly for any $\textbf{x} \in \triangle^n$, we have $\textbf{x}^\top A \textbf{x}=\textbf{x}^\top D \textbf{x}$, where $D$ is the affinity matrix of the difference graph between $G_2$ and $G_1$. Thus, this simple reduction proves that the \textbf{DCSGA} problem is also NP-hard.
\end{proof}

\subsection{The SEACD Algorithm}\label{sec:seacd}
Since $\textbf{DCSGA}$ is NP-hard and is a QP, we employ local search algorithms to find good solutions. Because the density difference $\textbf{x}^\top D \textbf{x}$ is normally non-concave, we seek for $\textbf{x}$ that satisfies the Karush-Kuhn-Tucker (KKT) conditions~\cite{boyd2004convex}, which are necessary conditions of local maxima points. It is easy to derive that, if $\textbf{x}$ is a KKT point of the \textbf{DCSGA} problem, it should satisfy
\begin{equation}\label{eq:KKT}
    \nabla_uf_D(\textbf{x})=2(D\textbf{x})_u
        \begin{cases}
            =\lambda~~x_u>0 \\
            \leq \lambda~~x_u=0
        \end{cases}
    \forall u \in V
\end{equation}
where $\nabla_uf_D(\textbf{x}^*)$ is the partial derivative with respect to $x_u$, and $(D\textbf{x}^*)_u$ is the $u$-th entry of the vector $D\textbf{x}^*$. Since $x \in \triangle^n$, when Eq.~\ref{eq:KKT} holds, we have $f_D(\textbf{x})=\sum_{u \in V}{x_u*(D\textbf{x})_u}=\frac{\lambda}{2}$. 

The condition in Eq.~\ref{eq:KKT} is also equivalent to 
\begin{equation}\label{eq:KKT_eq}
    \max_{k:x_k<1}{\nabla_kf_D(\textbf{x})} \leq \min_{k:x_k>0}{\nabla_kf_D(\textbf{x})}
\end{equation}

The Shrink-and-Expansion (SEA\footnote{The details of SEA algorithm are illustrated in Appendix.}) algorithm in~\cite{liu2013fast} utilizes a replicator dynamic to solve the problem that maximizes $\textbf{x}^\top A \textbf{x}$ s.t. $\textbf{x} \in \triangle^n$, where $A$ is an affinity matrix of an undirected graph. Although $D$ in Eq.~\ref{eq:DCS_ga} can also be regarded as an affinity matrix, unfortunately the SEA algorithm cannot be directly applied to our problem. This is because the replicator dynamic can only deal with non-negative matrices, while in our problem the matrix $D$ may have negative entries. Thus, we devise a 2-Coordinate Descent algorithm to solve Eq.~\ref{eq:DCS_ga}.

In every iteration of the 2-Coordinate Descent algorithm, we only pick two variables $x_i$ and $x_j$, and fix the rest $n-2$ variables. We adjust the values of $x_i$ and $x_j$ to increase the objective $f_D(\textbf{x})$ without violating the simplex constraint. Suppose $x_i+x_j=C$, and let $b_i=\sum_{a \in N_D(i),a \neq j}{D(a,i)x_a}$, $b_j=\sum_{a \in N_D(j),a \neq i}{D(a,j)x_a}$, where $N_D(i)$ is the set of $i$'s neighbors in $G_D$. We adjust $x_i$ and $x_j$ by solving a simple optimization problem involving only one variable, since $x_j$ should always equal $C-x_i$ when the rest $n-2$ variables are fixed. Specifically, the optimization problem is
\begin{equation}\label{eq:CD}
    \begin{split}
        &\text{max}~~~\frac{1}{2}f_D(\textbf{x})=g(x_i)=b_ix_i+b_j(C-x_i)+D(i,j)x_i(C-x_i)+Cnst \\
        &\text{s.t.}~~~~0 \leq x_i \leq C
    \end{split}
\end{equation}
where $Cnst$ is a constant independent from $x_i$ and $x_j$.

Eq.~\ref{eq:CD} can be solved analytically. There are two cases,
\begin{enumerate}
    \item $D(i,j)=0$, which means $i$ and $j$ are not adjacent in the difference graph $G_D$. Then $g(x_i)=(b_i-b_j)x_i+b_jC+Cnst$. Obviously we should set $x_i=C$ if $b_i > b_j$, and set $x_i=0$ if $b_i < b_j$. We do not adjust $b_i$ or $b_j$ if $b_i=b_j$.
    \item $D(i,j) \neq 0$, which means $i$ and $j$ are adjacent in $G_D$. We have $g(x_i)=-D(i,j)x_i^2+Bx_i+b_jC+Cnst$ where $B=D(i,j)C+b_i-b_j$. Let $r=\frac{B}{2D(i,j)}$. If $0 \leq r \leq C$, we set $x_i=\arg\max_{x \in \{0,r,c\}}{g(x)}$. If $r<0$ or $r>C$, we set $x_i=\arg\max_{x \in \{0,C\}}{g(x)}$.
\end{enumerate}

To pick $x_i$ and $x_j$ for an iteration, we exploit the partial derivatives. We pick $i=\arg\max_{k:x_k<1}{\nabla_kf_D(\textbf{x})}$ and $j=\arg\min_{k:x_k>0}{\nabla_kf_D(\textbf{x})}$. If $\nabla_if_D(\textbf{x}) \leq \nabla_jf_D(\textbf{x})$, which means we reach a KKT point, the algorithm stops.

The 2-Coordinate Descent algorithm is guaranteed to converge to a stationary point, which is equivalent to a KKT point because the constraint $\textbf{x} \in \triangle^n$ in Eq.~\ref{eq:DCS_ga} is linear~\cite{boyd2004convex}.

Picking $x_i$ and $x_j$ at the beginning of every iteration clearly can be done in $O(n)$ time. But $O(n)$ may still be too costly for large graphs. Thus, to further improve the efficiency of our algorithm, we adopt the strategy of the Shrink-and-Expansion algorithm. We define a \textbf{local KKT point} on $S \subseteq V$ as a point $\textbf{x} \in \triangle^n$ that satisfies the following conditions,
\begin{equation}\label{eq:KKT_local}
    \begin{split}
    &x_u=0~~if~~u \notin S \\
    &\nabla_uf_D(\textbf{x})=2(D\textbf{x})_u
        \begin{cases}
            =\lambda~~x_u>0 \\
            \leq \lambda~~x_u=0\\
        \end{cases}
    \forall u \in S \\
    & \lambda=2f_D(\textbf{x})
    \end{split}
\end{equation}
where the major difference from Eq.~\ref{eq:KKT} is that only the vertices in $S \subseteq V$ are considered. It is also equivalent to 
\begin{equation}\label{eq:KKT_local_eq}
	\max_{k \in S:x_k<1}{\nabla_kf_D(\textbf{x})} \leq \min_{k \in S:x_k>0}{\nabla_kf_D(\textbf{x})}
\end{equation}
The 2-Coordinate Descent algorithm is guaranteed to converge to a local KKT point on $S$, when we keep $x_u=0$ for every $u \notin S$, and $x_u$ is involved in iterations only when $u \in S$.

Algorithm~\ref{alg:seacd} shows our method. We start with an initial embedding $\textbf{x} \in \triangle^n$. Line~\ref{line:shrink} is the Shrink stage, since after calling the 2-coordinate descent algorithm, the support set of $\textbf{x}$ may shrink due to some originally positive $x_i$ is set to 0. Line~\ref{line:expansion} is the start of the expansion stage. We first enlarge $S$ by adding to $S$ the vertices whose partial derivatives are greater than $\lambda=2f_D(\textbf{x})$, and then do exactly the same expansion operation of the original SEA algorithm~\cite{liu2013fast} (see Appendix). If $Z$ in Line~\ref{line:expansion} is empty, the current $\textbf{x}$ is already a KKT point satisfying conditions in Eq.~\ref{eq:KKT} and the SEA iterations stop. 

\begin{algorithm}[t]
\TitleOfAlgo{\textbf{SEACD}}
\caption{Coordinate Descent SEA Algorithm.}
\label{alg:seacd}
\KwIn{$G_D$, an initial embedding $\textbf{x} \in \triangle^n$}
\KwOut{$\textbf{x}$}
\begin{algorithmic}
    \STATE $S \leftarrow S_{\textbf{x}}$
    \WHILE {\textbf{true}}
        \STATE Use the 2-Coordinate Descent algorithm and take $\textbf{x}$ as the initial value to find a local KKT point $\textbf{x}^{new}$ on $S$  \label{line:shrink}
        \STATE $\textbf{x} \leftarrow \textbf{x}^{new}$
        \STATE $S \leftarrow \{v \mid x_v > 0\}$, $\lambda \leftarrow 2f_D(\textbf{x})$
        \STATE $Z \leftarrow \{i \mid \nabla_if_D(\textbf{x}) > \lambda, i \in V\}$  \label{line:expansion}
        \IF {$Z=\emptyset$}
        	   \STATE \textbf{break} 
        \ENDIF
        \STATE Do the SEA Expansion operation on $S \cup Z$ to adjust $\textbf{x}$ \label{line:sea_expansion}
         \STATE $S \leftarrow S_{\textbf{x}}$
    \ENDWHILE \label{line:sea_end} 
\RETURN $\textbf{x}$
\end{algorithmic}
\end{algorithm}

Like the original SEA algorithm~\cite{liu2013fast}, The SEACD algorithm converges to a KKT point.
\begin{theorem}\label{th:converge}
The SEACD algorithm (Algorithm~\ref{alg:seacd}) is guaranteed to converge to a KKT point.
\end{theorem}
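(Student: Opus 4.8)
The plan is to show convergence via a standard ``monotonicity + finite state space'' argument adapted to the SEA framework. First I would argue that the sequence of objective values $f_D(\textbf{x})$ produced across the iterations of Algorithm~\ref{alg:seacd} is nondecreasing. Within a single pass, Line~\ref{line:shrink} runs the 2-Coordinate Descent algorithm, which by construction (each update in Eq.~\ref{eq:CD} is the exact maximizer of $g$ over the feasible segment) never decreases $f_D(\textbf{x})$ and converges to a local KKT point on $S$; I would then need to check that the SEA Expansion operation on Line~\ref{line:sea_expansion} also does not decrease $f_D(\textbf{x})$, which holds because the expansion step of~\cite{liu2013fast} moves mass along an ascent direction — here I can invoke the behavior of the original SEA Expansion described in the Appendix, noting only that $D$ enters linearly in $\nabla f_D$ so the sign computations that justify ascent are unaffected by negative entries of $D$.

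Next I would establish termination/convergence. Since $f_D$ is bounded above on the compact simplex $\triangle^n$ and the value sequence is nondecreasing, the values converge. To get convergence to a KKT point I would use the loop's stopping test: the algorithm breaks exactly when $Z=\emptyset$, i.e.\ when $\nabla_i f_D(\textbf{x}) \le \lambda = 2f_D(\textbf{x})$ for all $i\in V$; combined with the fact that after Line~\ref{line:shrink} we have $\nabla_u f_D(\textbf{x}) = \lambda$ for all $u\in S_{\textbf{x}}$ and $\nabla_u f_D(\textbf{x}) \le \lambda$ for $u\in S\setminus S_{\textbf{x}}$ (the local-KKT conditions Eq.~\ref{eq:KKT_local}), the condition $Z=\emptyset$ is precisely the global KKT condition Eq.~\ref{eq:KKT}. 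So if the algorithm halts, it halts at a KKT point. To see it does halt (or that the limit is KKT even if it runs forever), I would argue that whenever $Z\neq\emptyset$ the subsequent Shrink stage strictly increases $f_D(\textbf{x})$: adding a vertex $i$ with $\nabla_i f_D(\textbf{x}) > 2f_D(\textbf{x})$ to the active set and then re-running 2-Coordinate Descent yields a strict ascent, because moving an infinitesimal amount of mass onto $x_i$ from the rest raises the objective (the directional derivative is $\nabla_i f_D(\textbf{x}) - 2f_D(\textbf{x}) > 0$). Then I would pair this strict increase with the observation that each call to 2-Coordinate Descent terminates at a local KKT point, of which — once we also record the support set $S_{\textbf{x}}$ — there are only finitely many ``combinatorial types''; a strict increase in $f_D$ forbids revisiting, giving finite termination.

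The main obstacle I expect is the last point: ruling out an infinite run in which $f_D(\textbf{x})$ increases by ever-smaller amounts without ever reaching a KKT point. The clean way around it is exactly the one the paper hints at by saying ``like the original SEA algorithm'' — namely, show that each Shrink-Expansion pass that does not terminate produces a \emph{strict} and \emph{bounded-below} increase in $f_D$, which is impossible given boundedness of $f_D$ on $\triangle^n$. Concretely: after Expansion adds some $i$ with gap $\delta_i := \nabla_i f_D(\textbf{x}) - 2f_D(\textbf{x}) > 0$, the 2-Coordinate pairing rule picks $i=\argmax_k \nabla_k f_D$ and $j=\argmin_{k:x_k>0}\nabla_k f_D$, and the resulting exact line search along $(i,j)$ increases $g$ by an amount that can be lower-bounded in terms of $\delta_i$ and $D(i,j)$; since the finitely many possible nonzero values of $D(i,j)$ and the structure of $\triangle^n$ bound this below by a positive constant depending only on the graph, only finitely many such passes can occur. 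I would present this quantitative lower bound as the crux, then conclude that the algorithm terminates, necessarily with $Z=\emptyset$, hence at a point satisfying Eq.~\ref{eq:KKT}.
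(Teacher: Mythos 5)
First, a point of reference: the paper states Theorem~\ref{th:converge} without any proof --- it is asserted by analogy with the convergence of the original SEA algorithm of~\cite{liu2013fast} --- so there is no official argument to compare yours against line by line. Your overall skeleton is the natural one, and the first two-thirds of it are sound: the 2-Coordinate Descent updates are exact maximizations over a feasible segment, so $f_D(\textbf{x})$ is nondecreasing through the Shrink stage; the Expansion step is a line search along an ascent direction whose validity rests precisely on the Shrink stage having reached a local KKT point (which is why the paper insists on the gradient-gap convergence test rather than an objective-improvement test); and your identification of the stopping test $Z=\emptyset$ with the global condition Eq.~\ref{eq:KKT} is correct, since after Shrink every $u$ with $x_u>0$ satisfies $\nabla_u f_D(\textbf{x})=\lambda$ and $Z=\emptyset$ supplies $\nabla_i f_D(\textbf{x})\leq\lambda$ for all remaining vertices.

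The genuine gap is in your final quantitative step. You claim each non-terminating Shrink--Expansion pass increases $f_D$ by an amount bounded below by a positive constant ``depending only on the graph,'' which would yield finite termination. No such constant exists: the per-pass gain scales with the violation $\delta_i=\nabla_i f_D(\textbf{x})-2f_D(\textbf{x})$ (roughly quadratically, as the Expansion line-search formula in the Appendix shows), and $\delta_i$ can be arbitrarily small as the iterates approach a KKT point; the finitely many nonzero values of $D(i,j)$ do not control it because $\delta_i$ depends continuously on $\textbf{x}$. Moreover, the inner 2-Coordinate Descent is itself only asymptotically convergent to a local KKT point, so finite termination is the wrong target altogether. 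The repair is the standard one: monotonicity plus boundedness of $f_D$ on the compact simplex $\triangle^n$ forces the per-pass gains to zero; a lower bound of the form $c\,\delta^2$ on the gain then forces the KKT violation to zero, and compactness gives a convergent subsequence whose limit satisfies Eq.~\ref{eq:KKT}. Stated this way the theorem asserts convergence to a KKT point rather than termination at one, which is also how the convergence guarantee of the original SEA algorithm is phrased.
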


We analyze the computational cost of Algorithm~\ref{alg:seacd}. It is worth noting that to efficiently run Algorithm~\ref{alg:seacd}, the initial embedding $\textbf{x}$ should have a small support set such that during the execution of Algorithm~\ref{alg:seacd}, $S$ and $Z$ in the while loop are normally small sets. In the Shrink stage, for every iteration, we need $O(|S|)$ time to pick $x_i$ and $x_j$, $O(1)$ time to adjust $x_i$ and $x_j$, and $O(|N_D(i)|+|N_D(j)|)$ time to update the partial derivatives of the vertices affected by adjusting $x_i$ or $x_j$. $S$ is usually a small set and $|N_D(i)|+|N_D(j)|$ is often a small number since real-world graphs are normally sparse. Thus, the cost of each iteration of the shrink stage is low. In Line~\ref{line:expansion} of the Expansion stage, we only need to check the partial derivatives of the vertices that have at least one neighbor in $S$, since the partial derivatives of all other vertices are $0$. Thus, the cost of Line~\ref{line:expansion} is $\sum_{v \in S}{|N_D(v)|}$. Line~\ref{line:sea_expansion} is the same as the Expansion operation of the SEA algorithm, whose cost is $O(\sum_{v \in S \cup Z}{|N_D(v)|})$~\cite{liu2013fast}. Since both $S$ and $Z$ are normally small sets, the cost of one SEA iteration (one Shrink stage + one Expansion stage) is low.

\subsection{Refining a KKT Point Solution}\label{sec:refine}
After a KKT point solution is reached, we may further improve the solution. We call a clique in $G_D$ as a \textbf{positive clique} if all its edge weights are positive, and $\textbf{x} \in \triangle^n$ a \textbf{positive clique solution} if $G_D(S_{\textbf{x}})$ is a positive clique. Utilizing the 2-Coordinate Decent algorithm, we give a construction that refines a KKT point $\textbf{x}$ such that $G(S_{\textbf{x}})$ is not a positive clique to a better solution.

\begin{theorem}\label{th:clique}
For any KKT point $\textbf{x}$, let $S_{\textbf{x}}=\{v \mid x_v>0\}$. If $G_D(S_{\textbf{x}})$ is not a positive clique, we can find a $\textbf{y}$ such that $G_D(S_{\textbf{y}})$ is a positive clique and $f_D(\textbf{y}) \geq f_D(\textbf{x})$, where $S_{\textbf{y}}=\{v \mid y_v>0\}$ and $S_{\textbf{y}} \subseteq S_{\textbf{x}}$.
\end{theorem}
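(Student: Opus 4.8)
The plan is to build $\textbf{y}$ from $\textbf{x}$ by repeatedly shrinking the support using single steps of the 2-Coordinate Descent update (Eq.~\ref{eq:CD}) applied to \emph{bad pairs} of support vertices. Call a pair $i,j\in S_{\textbf{x}}$ \emph{bad} if $D(i,j)\le 0$, i.e.\ $i$ and $j$ are either non-adjacent in $G_D$ or joined by a negative-weight edge. Observe that $G_D(S_{\textbf{x}})$ is a positive clique precisely when no bad pair exists (a singleton support is vacuously a positive clique), so under the hypothesis of the theorem we have $|S_{\textbf{x}}|\ge 2$ and at least one bad pair. I will show that one 2-Coordinate Descent step on a bad pair removes at least one vertex from the support and does not decrease $f_D$, and that iterating terminates at a positive clique.

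The crux is the shape of the one-variable subproblem $g(x_i)$ in Eq.~\ref{eq:CD} when $D(i,j)\le 0$. Writing $g(x_i)=-D(i,j)x_i^2+Bx_i+(b_jC+Cnst)$ with $B=D(i,j)C+b_i-b_j$ and $C=x_i+x_j$, the leading coefficient $-D(i,j)$ is non-negative, so $g$ is convex on $[0,C]$. Hence $g$ attains its maximum over $[0,C]$ at an endpoint, and that maximum is at least the current value $g(x_i)$. Setting $x_i$ to the maximizing endpoint therefore never decreases $\frac12 f_D(\textbf{x})$, and it forces $x_i=0$ or $x_j=C-x_i=0$, so exactly one of $i,j$ leaves the support and no new vertex enters; the simplex constraint is preserved since $x_i+x_j$ and every other coordinate are fixed. (If one wants a \emph{strict} gain: at a KKT point with $x_i,x_j>0$, the equalities $\nabla_i f_D(\textbf{x})=\nabla_j f_D(\textbf{x})$ yield $b_i-b_j=D(i,j)(x_i-x_j)$, hence $B=2D(i,j)x_i$ and the stationary point $r=\frac{B}{2D(i,j)}=x_i\in(0,C)$; when $D(i,j)<0$ this $r$ is the \emph{minimizer} of the strictly convex $g$, so the better endpoint strictly beats $g(x_i)$.)

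Now iterate: while $G_D(S_{\textbf{x}})$ is not a positive clique, choose any bad pair and perform the update. Each step decreases $|S_{\textbf{x}}|$ by exactly one and never enlarges it, so after at most $|V|-1$ steps the process halts; at that point either $G_D(S_{\textbf{x}})$ is a positive clique or $|S_{\textbf{x}}|=1$ (again a positive clique, vacuously). Because the support only ever shrank, the final support $S_{\textbf{y}}$ is contained in the original $S_{\textbf{x}}$, and because $f_D$ never decreased along the chain of updates, the resulting embedding $\textbf{y}$ satisfies $f_D(\textbf{y})\ge f_D(\textbf{x})$, as required.

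The one delicate point is the degenerate case $D(i,j)=0$, where $g$ is merely affine rather than strictly convex: it is still maximized at an endpoint (and is constant exactly when $b_i=b_j$, which holds automatically at a KKT point), so a vertex can always be dropped without loss — this is what keeps the "never decreases" bound valid in every case. Beyond that the argument is bookkeeping. It is worth remarking that the KKT hypothesis is not actually needed for the stated inequality — the construction works verbatim from any $\textbf{x}\in\triangle^n$ — it merely pins down the natural input produced by Algorithm~\ref{alg:seacd} and, via the computation in the parenthetical above, guarantees strict improvement whenever a genuinely negative edge is removed.
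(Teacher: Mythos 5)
Your proof is correct and follows essentially the same route as the paper's: pick a pair $i,j$ in the support with $D(i,j)\le 0$, observe that the one-variable subproblem of Eq.~\ref{eq:CD} is convex (affine when $D(i,j)=0$) and hence maximized at an endpoint whose value is at least the current one, so a vertex can be dropped without decreasing $f_D$, and iterate until the shrinking support is a positive clique. The only difference is a minor streamlining: the paper re-runs the 2-Coordinate Descent to a local KKT point after each removal (which is why its $D(i,j)=0$ case invokes $\nabla_i f_D(\textbf{x})=\nabla_j f_D(\textbf{x})$ to get $b_i=b_j$), whereas you correctly note that for an affine $g$ the endpoint argument works regardless, so neither the intermediate re-convergence nor the KKT hypothesis is actually needed for the non-decrease claim.
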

\begin{proof}
Suppose $\textbf{x}$ is a KKT point and $G_D(S_{\textbf{x}})$ is not a positive clique. We pick $x_i$ and $x_j$ from $S_{\textbf{x}}$ such that $D(i,j) \leq 0$. 

If $D(i,j)=0$, since $\nabla_if_D(\textbf{x})=\nabla_jf_D(\textbf{x})$, we have $(D\textbf{x})_i=(D\textbf{x})_j$ which means $\sum_{a \in N_D(i)}{D(a,i)x_a}=\sum_{a \in N_D(j)}{D(a,j)x_a}$. Thus, $b_i=b_j$ in Eq.~\ref{eq:CD} and we have $g(x_i)=b_jC+Cnst$. Note that $b_jC$ is independent of $x_i$ and $x_j$, as long as $x_i+x_j=C$. We set $x_i=C$ and $x_j=0$ to remove vertex $j$ from the current subgraph, and the objective $f_D(\textbf{x})$ remains the same.

If $D(i,j)<0$, we solve the optimization problem in Eq.~\ref{eq:CD}. Apparently $g(x_i)$ is a convex function with respect to $x_i$ because $-D(i,j)>0$. To maximize the objective $g(x_i)$, we should set $x_i^{new}=\arg\max_{x \in \{0,C\}}{g(x)}$. Thus, after solving Eq.~\ref{eq:CD}, either $x_i$ or $x_j$ becomes 0 and the objective $f_D(\textbf{x})$ is improved.

Thus, if $G_D(S_{\textbf{x}})$ is not a positive clique, we can always remove one vertex $i$ (by setting $x_i=0$) that is incident to an edge with negative weight or is not adjacent to all other vertices in $S_{\textbf{x}}$, and keep the objective non-decreasing. Suppose after removing this vertex we get $\textbf{y}$. We use the 2-coordinate descent algorithm to adjust $\textbf{y}$ to a local KKT point on $S_{\textbf{y}}$, and obviously the objective $f_D(\textbf{y})$ is not decreased. If $G_D(S_{\textbf{y}})$ is still not a positive clique, we repeat the above procedure of removing one vertex and adjusting to a local KKT point. During this process, the support set shrinks if the current solution is not a positive clique solution. Since the support set cannot shrink forever (it should has at least 1 vertex), finally we will reach a positive clique solution $\textbf{y}$ such that $S_{\textbf{y}} \subseteq S_{\textbf{x}}$. Moreover, during the process of reaching $\textbf{y}$, the objective is non-decreasing. Thus, we have $f_D(\textbf{y}) \geq f_D(\textbf{x})$.
\nop{
During this process, the support set of our solution keeps shrinking, and the objective is non-decreasing. Thus, finally we will reach a solution $\textbf{y}$ such that $f_D(\textbf{y}) \geq f_D(\textbf{x})$ and $G_D(S_{\textbf{y}})$ is a positive clique. \todo{You need to specifically explain why $G_D(S_{\textbf{y}})$ will ultimately turn to positive.} At the same time, $S_{\textbf{y}} \subseteq S_{\textbf{x}}$.}
\end{proof}

Since an optimal $\textbf{x}$ must be a KKT point, Theorem~\ref{th:clique} implies that there exist a solution $\textbf{y} \in \triangle^n$ such that $\textbf{y}$ is an optimal solution to Eq.~\ref{eq:DCS_ga}, and $G_D(S_{\textbf{y}})$ is a positive clique in $G_D$. Note that a positive clique in $G_D$ is a clique in $G_{D^+}$. Thus, we can run Algorithm~\ref{alg:seacd} directly on $G_{D^+}$ instead of $G_D$ to get a solution $\textbf{x}$. If $G_{D^+}(S_{\textbf{x}})$ is not a clique in $G_{D^+}$, we use the construction in the proof of Theorem~\ref{th:clique} to find a new solution $\textbf{y}$ whose $G_{D^+}(S_{\textbf{y}})$ is a clique. Algorithm~\ref{alg:refine} shows the construction, where we do not consider the case when $D^+(i,j)<0$ since $D^+$ only has non-negative entries.

Since the edges with negative weights can be ignored, it seems we can run the original SEA algorithm~\cite{liu2013fast} on $G_{D^+}$ directly to find DCS. However, SEA in~\cite{liu2013fast} is not guaranteed to return a positive clique solution $\textbf{x}$. If $G_{D^+}(S_{\textbf{x}})$ is not a clique, $G_{D}(S_{\textbf{x}})$ may have some edges with negative weights and $\textbf{x}$ is definitely not an optimal solution. This is because according to the proof of Theorem~\ref{th:clique}, if $D(i,j)<0$ where $x_i>0$ and $x_j>0$, we can solve the optimization problem in Eq.~\ref{eq:CD} over $x_i$ and $x_j$ to further improve the objective. Therefore, we still need our refinement step (Algorithm~\ref{alg:refine}).

\begin{algorithm}[t]
\TitleOfAlgo{\textbf{Refinement}}
\caption{Refining a KKT point.}
\label{alg:refine}
\KwIn{$G_{D^+}$, a KKT point $\textbf{x}$}
\KwOut{$\textbf{y}$}
\begin{algorithmic}[1]
\STATE $\textbf{y} \leftarrow \textbf{x}$
\WHILE {$G_{D^+}(S_\textbf{y})$ is not a clique}
    \STATE Pick $u$ and $v$ such that $(u,v)$ is not an edge in $G_{D^+}$
    \STATE $y_u \leftarrow y_u+y_v$, $y_v \leftarrow 0$
    \STATE Use the 2-Coordinate Descent algorithm and take $\textbf{y}$ as the initial value to find a local KKT point $\textbf{y}^{new}$ on $S_{\textbf{y}}$
    \STATE $\textbf{y} \leftarrow \textbf{y}^{new}$
\ENDWHILE
\RETURN $\textbf{y}$
\end{algorithmic}
\end{algorithm}

Always returning a positive clique solution as the DCS is one advantage of adopting graph affinity as the density measure, since the returned DCS has very good interpretability. From $G_1$ to $G_2$, for every pair of vertices in the DCS, their connection is enhanced.

Please note that, although there exists an optimal solution $\textbf{x}$ such that $G_D(S_{\textbf{x}})$ is a positive clique, it does not mean a maximum clique finding algorithms like~\cite{rossi2014fast} can be applied to solve the \textbf{DCSGA} problem. The major reason is that $G_D$ in the \textbf{DCSGA} problem is a weighted graph while maximum clique finding algorithms deal with unweighted graphs.

\noindent \textbf{Advantages of the Coordinate-Descent SEA} With the help of the refinement step (Algorithm~\ref{alg:refine}), the original SEA algorithm~\cite{liu2013fast} works for the DCSGA problem. However, our Coordinate-Descent SEA algorithm has some advantages over the original SEA algorithm. The correctness of the Expansion operation (see Appendix) depends on that a local KKT point is reached in the Shrink stage. Thus, when implementing the Shrink stage, the correct condition of convergence should be $\max_{k \in S:x_k<1}{\nabla_kf_D(\textbf{x})}-\min_{k \in S:x_k>0}{\nabla_kf_D(\textbf{x})} \leq \epsilon$, where $S$ is the set of vertices on which we try to find a local KKT point, and $\epsilon$ is the parameter of precision. However, the original SEA~\cite{liu2013fast} adopts $f_D(\textbf{x})-f_D(\textbf{x}^{old}) \leq \epsilon$ as the convergence condition, where $\textbf{x}$ and $\textbf{x}^{old}$ are the solutions after and before a Shrink iteration by the replicator dynamics. In Section~\ref{sec:exp}, we will show that such a convergence condition may fail to achieve a local KKT point and as a result, the objective $f_D(\textbf{x})$ is even reduced in the following Expansion stage. Moreover, when the convergence condition of the Shrink stage is correctly set, the replicator dynamics of the original SEA~\cite{liu2013fast} converges much slower than the coordinate-descent method, especially on dense graphs. Since our algorithm can also deal with graph with only positive edge weights, it is also a competitive solution to the traditional graph affinity maximization problem.

\subsection{Smart Initializations of $\textbf{x}$}
One problem remaining unsettled is how to choose the initial embedding $\textbf{x}$ for running Algorithm~\ref{alg:seacd}. Since the \textbf{DCSGA} problem is non-concave, we adopt the strategy of multiple initializations, that is, we run Algorithm~\ref{alg:seacd} multiple times with different initial embeddings. The best solution generated in all runs is returned as the final solution. For the interest of efficiency of the SEACD algorithm as illustrated in Section~\ref{sec:seacd}, an initial embedding $\textbf{x}$ should have a small support set.

One simple way of initialization is to set $\textbf{x}=\textbf{e}_u$, where in $\textbf{e}_u$, only the $u$-th entry is 1 and all other entries are 0. The original SEA algorithm employs this simple method and it uses every vertex $u \in V$ to set the initial embedding~\cite{liu2013fast}. Thus, in~\cite{liu2013fast}, the SEA algorithm is called $n=|V|$ times. 

For large graphs, $O(n)$ initializations are clearly very time-consuming. We adopt a smart heuristic to reduce the number of initializations. The major idea is to first find an upper bound $\mu_u$ for each $u \in V$, where $\mu_u$ is the upper bound of $\textbf{x}^{\top}D\textbf{x}$ for any $\textbf{x} \in \triangle^n$ such that $x_u>0$ and $G_{D^+}(S_{\textbf{x}})$ is a clique. Then we only use the vertices with big upper bounds to do initializations.

Define the \textbf{ego net} of $u$ in $G_{D^+}$ as $G_{D^+}(T_u)$ where $T_u$ is the set containing $u$ and all $u$'s neighbors in $G_{D^+}$. Let $w_u=\max_{i \in T_u \vee j \in T_u}{D^+(i,j)}$. Clearly, $w_u$ is an upper bound of the maximum edge weight in $u$'s ego net. Using $O(|E_{D^+}|)$ time, we compute $w_u$ for every $u \in V$. 
\begin{theorem}\label{th:bound}
For any $u \in V$, $\textbf{x}^{\top}D\textbf{x} \leq \frac{(k-1)w_u}{k}$, where $\textbf{x} \in \triangle^n$ and $G_{D^+}(S_{\textbf{x}})$ is a $k$-clique containing $u$, and $w_u$ is an upper bound of the maximum edge weight in $G_{D^+}(T_u)$, the ego net of $u$ in $G_{D^+}$.
\end{theorem}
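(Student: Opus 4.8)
The plan is to exploit the clique structure twice: once to localise the support of $\textbf{x}$ inside the ego net $T_u$, and once to reduce $\textbf{x}^\top D\textbf{x}$ to the classical Motzkin--Straus-type quadratic on a simplex.

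First I would observe that because $G_{D^+}(S_{\textbf{x}})$ is a clique and $u \in S_{\textbf{x}}$, every other vertex of $S_{\textbf{x}}$ must be adjacent to $u$ in $G_{D^+}$; hence $S_{\textbf{x}} \subseteq T_u$. Consequently every edge $(i,j)$ with $i,j \in S_{\textbf{x}}$ lies inside the ego net $G_{D^+}(T_u)$, so its weight satisfies $0 < D(i,j) = D^+(i,j) \leq w_u$ (positivity holds since a clique in $G_{D^+}$ is a positive clique in $G_D$, and the upper bound is exactly the definition of $w_u$).

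Next I would expand the objective using that $x_v = 0$ for $v \notin S_{\textbf{x}}$ and that $D$ has a zero diagonal:
\begin{equation*}
\textbf{x}^\top D \textbf{x} = \sum_{i \neq j,\ i,j \in S_{\textbf{x}}} x_i x_j D(i,j) \leq w_u \sum_{i \neq j,\ i,j \in S_{\textbf{x}}} x_i x_j = w_u\Bigl( \bigl(\textstyle\sum_{i \in S_{\textbf{x}}} x_i\bigr)^2 - \textstyle\sum_{i \in S_{\textbf{x}}} x_i^2 \Bigr) = w_u\bigl(1 - \|\textbf{x}\|_2^2\bigr),
\end{equation*}
where the last equality uses $\textbf{x} \in \triangle^n$. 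It is essential here that $G_{D^+}(S_{\textbf{x}})$ is a \emph{clique}: that is what lets me bound \emph{every} cross term $x_ix_j$ by $w_u$ and collapse the sum to $1 - \|\textbf{x}\|_2^2$.

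Finally I would lower-bound $\|\textbf{x}\|_2^2$. Since $\textbf{x}$ is supported on the $k$ vertices of $S_{\textbf{x}}$ and $\sum_{i \in S_{\textbf{x}}} x_i = 1$, the Cauchy--Schwarz inequality (equivalently QM--AM) gives $\|\textbf{x}\|_2^2 = \sum_{i \in S_{\textbf{x}}} x_i^2 \geq \frac{1}{k}\bigl(\sum_{i \in S_{\textbf{x}}} x_i\bigr)^2 = \frac1k$. Substituting, $\textbf{x}^\top D\textbf{x} \leq w_u\bigl(1 - \tfrac1k\bigr) = \frac{(k-1)w_u}{k}$, which is the claimed bound. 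There is no genuinely hard step; the only point that needs care is justifying $S_{\textbf{x}} \subseteq T_u$ and therefore the edge-weight bound $D(i,j) \le w_u$ — everything after that is the standard simplex-quadratic estimate.
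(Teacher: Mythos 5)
Your proposal is correct and follows essentially the same route as the paper: bound every edge weight of the clique by $w_u$ (justified via $S_{\textbf{x}} \subseteq T_u$) and then bound the remaining quadratic $\sum_{i \neq j} x_i x_j$ on the simplex by $\frac{k-1}{k}$. Your Cauchy--Schwarz step $\|\textbf{x}\|_2^2 \geq \frac{1}{k}$ is in fact a cleaner justification of the paper's ``it is easy to find that $\sum x_i x_j \leq \frac{k-1}{k}$'' claim, which the paper states only informally.
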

\begin{proof}
Suppose for $\textbf{x} \in \triangle^n$, $G_{D^+}(S_{\textbf{x}})$ is a $k$-clique containing $u$. Thus, $\textbf{x}^{\top}D^+\textbf{x}=\sum_{(i,j) \in E_{D^+}(S_{\textbf{x}})}{x_ix_jD^+(i,j)}=\textbf{x}^{\top}D\textbf{x}$. Since $G_{D^+}(S_{\textbf{x}})$ is a $k$-clique containing $u$, for any $(i,j) \in E_{D^+}(S_{\textbf{x}})$, $D(i,j) \leq w_u$. Therefore,  $\textbf{x}^{\top}D\textbf{x} \leq w_u\sum_{(i,j) \in E_{D^+}(S_{\textbf{x}})}{x_ix_j}$. When $G_{D^+}(S_{\textbf{x}})$ is a $k$-clique, it is easy to find that $\sum_{(i,j) \in E_{D^+}(S_{\textbf{x}})}{x_ix_j} \leq \sum_{(i,j) \in E_{D^+}(S_{\textbf{x}})}{\frac{1}{k}\frac{1}{k}}=\frac{k-1}{k}$. Thus, $\textbf{x}^{\top}D\textbf{x} \leq \frac{(k-1)w_u}{k}$.
\end{proof}

Based on Theorem~\ref{th:bound}, assuming $k_u$ is the size of the maximum clique in $G_{D^+}$ that contains $u$, then $\textbf{x}^TD^{+}\textbf{x}$ is no more than $\frac{(k_u-1)w_u}{k_u}$, where $\textbf{x} \in \triangle^n$, $x_u>0$ and $G_{D^+}(S_{\textbf{x}})$ is a clique. Although computing $k_u$ for every $u \in V$ is NP-hard, it is easy to find an upper bound of $k_u$, which is $\tau_u+1$ where $\tau_u$ is the core number of $u$ in $G_{D^+}$~\cite{rossi2014fast}. Thus, we use $\mu_u=\frac{\tau_uw_u}{\tau_u+1}$ as the upper bound of the affinity of a clique in $G_{D^+}$ that contains $u$. Note that computing $\tau_u$ for every $u \in V$ can be done in $O(|E_{D^+}|)$ time~\cite{rossi2014fast}.
  
We sort all vertices in $V$ in the descending order of $\mu_u$. Then we use the new order of vertices to initialize $\textbf{x}$. Suppose we have tried some vertices and $\textbf{y}$ is the current best solution. Then all vertices $v$ such that $\mu_v \leq f_D(\textbf{y})$ will not be used to initialize $\textbf{x}$. In such a case, normally we only need to do a small number of initializations. 

Note that when we use a vertex $u$ to initialize $\textbf{x}$, it is not guaranteed that after running the SEACD algorithm and the Refinement algorithm a solution $\textbf{x}$ is returned where $x_u>0$. It is possible that $x_u=0$ in the returned solution $\textbf{x}$. Thus, our method for reducing the number of initializations is not a pruning technique, but a heuristic. In Section~\ref{sec:exp} we show that our smart initialization heuristic is very effective and it never impairs the quality of the final solution $\textbf{x}$ compared to trying all vertices for initializations in experiments.

\medskip
Combining all results in this section, we propose the NewSEA algorithm shown in Algorithm~\ref{alg:new_sea}.
\begin{algorithm}[t]
\TitleOfAlgo{\textbf{NewSEA}}
\caption{The NewSEA algorithm for solving \textbf{DCSGA}.}
\label{alg:new_sea}
\KwIn{$G_{D^+}$}
\KwOut{$\textbf{y}$}
\begin{algorithmic}[1]
\STATE $\textbf{y} \leftarrow \textbf{0}$
\STATE Compute $w_u$, $\tau_u$ for every $u \in V$
\STATE Compute $\mu_u=\frac{\tau_uw_u}{\tau_u+1}$ for every $u \in V$
\STATE Sort $V$ in descending order of $\mu_u$
\FOR {$u \in V$}
	\IF {$\mu_u \leq f_D(\textbf{y})$}
		\STATE \textbf{break}
	\ENDIF
	\STATE Set $\textbf{x}$ such that $\textbf{x}_u=1$ and $\textbf{x}_v=0$ for all $v \neq u$
	\STATE $\textbf{x} \leftarrow \textbf{SEACD}(G_{D^+},\textbf{x})$
	\STATE $\textbf{x} \leftarrow \textbf{Refinement}(G_{D^+},\textbf{x})$
	\IF {$f_D(\textbf{x})> f_D(\textbf{y})$}
        		\STATE $\textbf{y} \leftarrow \textbf{x}$
    	\ENDIF
\ENDFOR
\RETURN $\textbf{y}$
\end{algorithmic}
\end{algorithm}

\section{Experiments}\label{sec:exp}
In this section, we report a series of experiments to verify the effectiveness and efficiency of our algorithms. 

\subsection{Algorithms and Datasets in Experiments}
For the \textbf{DCSAD} problem, we tested our DCSGreedy algorithm, the Greedy algorithm on $G_D$ (denoted by \textbf{$G_D$ only}) and the Greedy algorithm on $G_{D^+}$ (denoted by \textbf{$G_{D^+}$ only}). For the \textbf{DCSGA} problem, we tested our NewSEA algorithm, our SEACD algorithm plus the Refinement step but without our smart initializations heuristic (denoted by \textbf{SEACD+Refine}), and the original SEA algorithm~\cite{liu2013fast} plus the Refinement step (denoted by \textbf{SEA+Refine}). All \textbf{DCSGA} algorithms were run on $G_{D^+}$ directly. The convergence condition of the Shrink stage in NewSEA and SEACD+Refine are all set to $\max_{k \in S:x_k<1}{\nabla_kf_D(\textbf{x})}-\min_{k \in S:x_k>0}{\nabla_kf_D(\textbf{x})} \leq 10^{-2}*\frac{1}{|S|}$, where $S$ is the current set on which we want to reach a local KKT point. This convergence condition very often is too difficult to achieve for the replicator dynamic in the Shrink stage of SEA+Refine, because the replicator dynamic converges too slowly. Thus, for the Shrink stage in SEA+Refine, the convergence condition was set to that the improvement of the objective $f_D(\textbf{x})$ is less than $10^{-6}$ after one iteration. As pointed out in Section~\ref{sec:refine}, this convergence condition actually is not enough for achieving a local KKT point. Thus, in our experiments, the Shrink stage of SEA+Refine sometimes could not converge to a local KKT point and as a result, in the following Expansion stage error occurred, the objective $f_D(\textbf{x})$ was even reduced after expansion.  

We list the statistics of all data sets used in our experiments in Table~\ref{tab:gd_sta}.  The setting of ``Weighted'' represents that we built $G_D$ as $G_2-G_1$ directly. In some graphs there are several edges with weights significantly greater than the weights of other edges, they make the DCS with respect to graph affinity a very small subgraph, sometimes even a single edge. Thus, to limit the influence of these small number of edges with too heavy weights, we also tried the Discrete setting, where we set edge weights in $G_D$ discrete values such that the maximum weight is not too much greater than the other edge weights. Details of how to set edge weight in the Discrete Setting and ``$G_D$ Type" are illustrated in each task in the rest of this section.    
\begin{table*}
\centering
\begin{tabular}{|c|c|c|c|c|c|c|c|c|}
\hline
Data & Setting & $G_D$ Type & n & $m^+$ & $m^-$ & Max $w$ & Min $w$ & Average $w$ \\ \hline
DBLP & Weighted & Emerging & 22,572 & 61,703 & 61,551 & 46 & -100 & -0.015 \\ \hline
DBLP & Weighted & Disappearing & 22,572 & 61,551 & 61,703 & 100 & -46 & 0.015 \\ \hline
DBLP & Discrete & Emerging & 22,572 & 21,367 & 61,551 & 2 & -2 & -0.518 \\ \hline
DBLP & Discrete & Disappearing & 22,572 & 61,551 & 21,367 & 2 & -2 & 0.518 \\ \hline
DM & \NA & Emerging & 9890 & 140,705 & 67,541 & 1.988 & -5.997 & 0.0007 \\ \hline
DM & \NA & Disappearing & 9890 & 67,541 & 140,705 & 5.997 & -1.988 & -0.0007 \\ \hline
Wiki & \NA & Consistent &116,836 & 762,999 & 1,264,872 & 9.619 & -12.46 & -0.474 \\ \hline
Wiki & \NA & Conflicting &116,836 & 1,264,872 & 762,999 & 12.46 & -9.619 & 0.474 \\ \hline
Movie & \NA & Interest$-$Social & 55,710 & 338,524 & 914,292 & 1 & -1 & -0.46 \\ \hline
Movie & \NA & Socia$-$Interest & 55,710 & 914,292 & 338,524 & 1 & -1 & 0.46 \\ \hline
Book  & \NA & Interest$-$Social &55,710 & 124,027 & 918,925 & 1 & -1 & -0.762\\ \hline
Book  & \NA & Social$-$Interest &55,710 & 918,925 & 124,027 & 1 & -1 & 0.762\\ \hline
DBLP-C & Weighted & \NA & 1,282,461 & 2,538,746 & 2,359,487 & 400 & -186 & 0.188 \\ \hline
DBLP-C & Discrete & \NA & 1,282,461 & 2,538,746 & 2,359,487 & 2 & -2 & -0.013 \\ \hline
Actor  & Weighted & \NA & 382,219 & 15,038,083 & 0 & 216 & 1 & 1.101\\ \hline
Actor  & Discrete & \NA & 382,219 & 15,038,083 & 0 & 10 & 1 & 1.098\\ \hline
\end{tabular}
\caption{Statistics Difference Graphs in Experiments (\rm{$n$ represents \#vertices, $m^+$ is \#edges with positive weights and $m^-$ is \#edges with negative weights. ``Max $w$'' is the maximum edge weight while ``Min $w$'' is the minimum one. We also report the average edge weight in the column of ``Average $w$''. ``Setting'' and ``$G_D$ Type'' denote how the difference graph was built. ``$G_D$ Type'' denotes which graph is used as $G_1$ and which is used as $G_2$.} )}
\label{tab:gd_sta}
\end{table*}

\subsection{Finding Emerging and Disappearing Co-author Groups}

We applied DCS to find emerging/disappearing co-author groups from co-author networks. We adopted the DBLP dataset (\url{https://static.aminer.org/lab-datasets/citation/dblp.v8.tgz}) and extracted all papers published in the top conferences according to the CS Ranking website (\url{http://csrankings.org/}). Based on these papers, we built two co-author graphs. The first graph $G_1=\langle V,E_1,A_1 \rangle$ contains the co-authorships before the year of 2010, and the second one $G_2=\langle V,E_2,A_2 \rangle$ contains the co-authorships from 2010 to 2016. For an edge linking two authors in a co-author graph, the weight is the number of papers written by these two authors together. 

To build the difference graph $G_D=\langle V,E_D,D \rangle$, we tried two settings, the Weighted setting and the Discrete setting. In the Weighted setting, we set $D(u,v)=A_2(u,v)-A_1(u,v)$, which is the standard setting of the DCS problem. In the Discrete setting, the entries of $D$ are set to discrete values. Specifically, if $A_2(u,v)-A_1(u,v) \geq 5$, which means $u$ and $v$ have at least 5 more co-authored papers in $G_2$ than in $G_1$, we set $D(u,v)=2$. If $2 \leq A_2(u,v)-A_1(u,v)<5$ , we set $D(u,v)=1$. If $-4<A_2(u,v)-A_1(u,v)<0$, we set $D(u,v)=-1$. If $A_2(u,v)-A_1(u,v) \leq -4$, we set $D(u,v)=-2$. The two different settings of $G_D$ normally lead to different DCS.

Running our DCS algorithms on $G_D$ described above, no matter in Weighted setting or Discrete Setting, what we find is the Emerging co-author group whose strength (density) of collaborations was enhanced after 2010. Thus, the type of $G_D$ described above is called Emerging. We also wanted to mine the disappearing co-author group whose collaboration strength was weakened the most after 2010. Therefore, we tried another type of $G_D$, the Disappearing $G_D$, which was obtained by flipping the sign of weight of each edge in the Emerging $G_D$.

It turned out that, under the same $G_D$ and the same density measure, all algorithms find the same group of authors. We list all co-author groups obtained in Table~\ref{tab:group}. If a group is found under the graph affinity measure, the weight (in the simplex) of each author is also given. We give a short note on the affiliation/address and research interest of each group. Table~\ref{tab:group_info} reports the groups found under different settings and density measures. For the average degree measure, we also report the approximation ratio $\frac{2\rho_{D^+}(S_2)}{\rho_D(S)}$. For each group, we report its density differences under the two measures. Note that, for $\textbf{x}$ under the graph affinity measure, its average degree is $\frac{W_D(S_{\textbf{x}})}{|S_{\textbf{x}}|}$. We also report the edge density difference, defined as $\frac{W_D(S)}{|S|^2}$ of each co-author group, since edge density can be regarded as a discrete version of graph affinity.

The results show that the research topics of the emerging groups are machine learning and security, which both are hot topics in recent years. As to the disappearing groups, Compiler \& Software System are all relatively mature areas of computer science, and, for the 3 Japanese Robotics research groups, it is known that recently Japanese researchers do not publish as many papers in international conferences as they did before. 

\begin{table*}
\centering
\begin{tabular}{|p{130mm}|c|}
\hline
List of Authors & Note \\ \hline
Feiping Nie(0.4428), Heng Huang(0.462), Chris H. Q. Ding(0.0230), Hua Wang(0.0717) & UTA Machine Learning \\ \hline
Lorrie Faith Cranor(0.1428), Nicolas Christin(0.1428), Blase Ur(0.1428), Richard Shay(0.1428), Saranga Komanduri(0.1428), Michelle L. Mazurek(0.1428), Lujo Bauer(0.1428) & CMU Privacy \& Security \\ \hline
Kensuke Harada, Kiyoshi Fujiwara, Fumio Kanehiro, Hirohisa Hirukawa, Shuuji Kajita, Kenji Kaneko & Japan Robotics 1 \\ \hline
Toshio Fukuda(0.5), Fumihito Arai(0.5) & Japan Robotics 2 \\ \hline
Fumio Kanehiro(0.1428), Shuuji Kajita(0.1428), Kenji Kaneko(0.1428), Kensuke Harada(0.1428), Kiyoshi Fujiwara(0.1428), Hirohisa Hirukawa(0.1428), Mitsuharu Morisawa(0.1428) & Japan Robotics 3 \\ \hline
Monica S. Lam, Katherine A. Yelick, Alok N. Choudhary, Michael L. Scott, James C. Browne, Marina C. Chen, Rudolf Eigenmann, Dennis Gannon, Charles Koelbel, Wei Li 0015, Thomas J. LeBlanc, David A. Padua, Constantine D. Polychronopoulos, Sanjay Ranka, Ian T. Foster, Carl Kesselman, Geoffrey Fox, Tomasz Haupt, Allen D. Malony, Janice E. Cuny, Joel H. Saltz, Alan Sussman & Compiler \& Software System \\ \hline
\end{tabular}
\caption{Co-author Groups}
\label{tab:group}
\end{table*}

\begin{table*}
\centering
\begin{tabular}{|c|c|c|c|c|c|c|c|c|c|}
\hline
Setting & $G_D$ Type & Density & Co-author Group & \#Authors & \tabincell{c}{Positive\\Clique?} & \tabincell{c}{Ave. Degree\\Difference} & \tabincell{c}{Approx.\\Ratio} & \tabincell{c}{Graph Affinity\\Difference} & \tabincell{c}{Edge Density\\Difference ($\frac{W_D(S)}{|S|^2}$)} \\ \hline
Weighted & Emerging & Average Degree & \tabincell{c}{UTA Machine\\Learning} & 4 & Yes & 81.5 & 2 & \NA & 20.375 \\ \hline
Weighted & Emerging & Graph Affinity & \tabincell{c}{UTA Machine\\Learning} & 4 & Yes & 81.5 & \NA & 23.167 & 20.375 \\ \hline
Weighted & Disappearing & Average Degree & \tabincell{c}{Japan\\Robotics 1} & 6 & Yes & 143 & 2 & \NA & 23.833 \\ \hline
Weighted & Disappearing & Graph Affinity & \tabincell{c}{Japan\\Robotics 2} & 2 & Yes & 50 & \NA & 50 & 50 \\ \hline
Discrete & Emerging & Average Degree & \tabincell{c}{CMU Privacy \\\& Security} & 7 & Yes & 12 & 2 & \NA & 1.714 \\ \hline
Discrete & Emerging & Graph Affinity & \tabincell{c}{CMU Privacy \\\& Security} & 7 & Yes & 12 & \NA & 1.714 & 1.714 \\ \hline
Discrete & Disappearing & Average Degree & \tabincell{c}{Compiler \&\\Software System} & 22 & Yes & 21.45 & 2 & \NA & 0.975 \\ \hline
Discrete & Disappearing & Graph Affinity & \tabincell{c}{Japan\\Robotics 3} & 8 & Yes & 14 & \NA & 1.714 & 1.714\\ \hline
\end{tabular}
\caption{Information of Co-Author Groups}
\label{tab:group_info}
\end{table*}
 
\subsection{\updates{Mining Emerging and Disappearing Data Mining Topics}} 
\updates{Using the same DBLP dataset, we extracted titles of papers published in some famous Data Mining venues including KDD, ICDM, SDM, PKDD, PAKDD, TKDE, TKDD and DMKD. Similar to~\cite{angel2012dense}, we built keyword association graphs from the paper titles. Unlike~\cite{angel2012dense}, we tried to identify emerging and disappearing data mining topics during 2008-2017, compared to the time period 1998-2007. Thus, we split all paper titles in two parts according to their publication years, and built two keyword association graphs $G_1$ (for 1998-2007) and $G_2$ (for 2008-2017). We removed all stop words and used the rest words in these paper titles as keywords. The edge weights of $G_1$ and $G_2$ were set based on the pairwise co-occurrences of keywords as suggested by~\cite{angel2012dense}. Specifically, for an edge between two keywords, we set its weight as 100 times the percentage of paper titles containing both the keywords. Statistics of the difference graphs can be found in Table~\ref{tab:gd_sta} (the DM dataset).}

\updates{This time again all DCSGA algorithms found the same emerging topic \textbf{\{social (0.5), networks (0.5)\}} and the same disappearing topic \textbf{\{mining (0.12), association (0.44), rules (0.44)\}}. Our DCSGreedy algorithm for solving DCSAD also found the disappearing topic \{mining, association, rules\}. We skip the emerging topic w.r.t. the average degree measure, because DCSGreedy found a large set of 38 keywords which lacks interpretability. Since a research topic/story often only has a few keywords, the graph affinity which prefers small and densely connected subgraphs is a more proper density measure in this task compared to the average degree. In~\cite{angel2012dense}, Angel~\textit{et~al.} also suggested to use small and dense subgraphs for identifying stories in text data.
}
 
\updates{To further demonstrate the effectiveness of applying DCS in identifying emerging/disappearing research topics, we also display the top results returned by our SEACD+Refinement algorithm. Remember this algorithm does initializations using every vertex in $G_D$ and returns multiple positive cliques in $G_D$. We removed the duplicate cliques and the cliques that are sub-graphs of other cliques found. We list the top-5 positive cliques with the highest graph affinity difference found by the SEACD+Refinement in Table~\ref{tab:DM_EmDis}. }

\updates{From the results we can find that our DCSGA algorithms are very effective. Social networks, matrix factorization, semi-supervised learning and unsupervised feature selection all became hot topics only in recent years, and they were not that popular in early years. Moreover, due to the need from industry and the development of computation power, large scale is turning into one of the most important concerns in data mining research. For the disappearing topics, association rule mining, support vector machines, inductive logic programming and intrusion detection are all relatively mature research topics which were majorly investigated in early years. ``Knowledge discovery'' used to be a popularly adopted term when data mining as a research area arouse.}

\updates{What's more, we also report the top-5 topics in $G_1$ and $G_2$ in Table~\ref{tab:DM_G1G2}. Since average degree density measure prefers large subgraphs and is not very proper for identifying topics/stories, we do not report the top topics w.r.t.\ average degree. The aim of displaying such results is to show the necessity of applying DCS to find emerging/disappearing topics. If we mine emerging/disappearing topics only in one graph like~\cite{angel2012dense} does, the results may be not effective. For example, if we only consider $G_2$ to mine emerging topics, we would find \{time (0.5), series (0.5)\} and \{feature (0.5), selection (0.5)\}. However, \{time (0.5), series (0.5)\} and \{feature (0.5), selection (0.5)\} were hot topics before 2008 so they were not emerging topics during 2008-2017. The topic \{time (0.5), series (0.5)\} even cooled down in the last ten years, since its graph affinity density dropped from 1.185 (in $G_1$) to 1.049 (in $G_2$) according to our calculation.}
\begin{table}
\centering
\begin{tabular}{|c|c|c|}
\hline
\multirow{2}*{Rank} & \multicolumn{2}{|c|}{Keyword Set/Topic}  \\ \cline{2-3}
 & Emerging & Disappearing\\ \hline
1 & \{social (0.5), networks (0.5)\} & \tabincell{c}{\{mining (0.12), association (0.45),\\ rules (0.43)\}} \\ \hline
2 & \{large (0.5), scale (0.5)\} & \{knowledge (0.5), discovery (0.5)	\} \\ \hline
3 & \{matrix (0.5), factorization (0.5)\} & \tabincell{c}{\{support (0.39), vector (0.38),\\ machines (0.23)	\}} \\ \hline
4 & \tabincell{c}{\{semi (0.45), supervised (0.45),\\ learning (0.1)\}} & \tabincell{c}{\{logic (0.36), inductive (0.26),\\ programming (0.38)\}} \\ \hline
5 & \tabincell{c}{\{unsupervised (0.34), feature (0.29),\\ selection (0.27)\}} & \{intrusion (0.5), detection (0.5)\} \\ \hline
\end{tabular}
\caption{\updates{Top 5 Emerging/Disappearing Topics w.r.t.\ Graph Affinity}}
\label{tab:DM_EmDis}
\end{table}

\begin{table}
\centering
\begin{tabular}{|c|c|c|}
\hline
\multirow{2}*{Rank} & \multicolumn{2}{|c|}{Keyword Set/Topic}  \\ \cline{2-3}
 & $G_1$ (1998-2007) & $G_2$ (2008-2017)\\ \hline
1 & \{time (0.5), series (0.5)\} & \{social (0.5), networks (0.5)\} \\ \hline
2 & \tabincell{c}{\{support (0.41), vector (0.41),\\ machines (0.18)\}} & \{time (0.5), series (0.5)\} \\ \hline
3 & \{feature (0.5), selection (0.5)\} & \{large (0.5), scale (0.5)\} \\ \hline
4 & \{decision (0.5), trees (0.5)\} & \{feature (0.5), selection (0.5)\} \\ \hline
5 & \{nearest (0.5), neighbor (0.5)\} & \tabincell{c}{semi (0.46), supervised (0.47),\\ learning (0.07)\}} \\ \hline
\end{tabular}
\caption{\updates{Top 5 Topics w.r.t.\ Graph Affinity}}
\label{tab:DM_G1G2}
\end{table}

\nop{
\begin{table}
\centering
\begin{tabular}{|c|c|c|c|c|}
\hline
\multirow{2}*{Rank} & \multicolumn{2}{|c|}{$G_1$ (1998-2007)} & \multicolumn{2}{|c|}{$G_2$ (2008-2017)}  \\ \cline{2-5}
 & Keywords Set/Topic & \tabincell{c}{Graph\\Affinity} & Keywords Set/Topic & \tabincell{c}{Graph\\Affinity} \\ \hline
 1 & \{time (0.5), series (0.5)\} & 1.00 & \{social (0.5), networks (0.5)\} & 1.00 \\ \hline
2 & \tabincell{c}{\{support (0.41), vector (0.41),\\ machines (0.18)\}} & 1.00 & \{time (0.5), series (0.5)\} & 1.00 \\ \hline
3 & \{feature (0.5), selection (0.5)\} & 1.00 & \{large (0.5), scale (0.5)\} & 1.00  \\ \hline
4 & \{decision (0.5), trees (0.5)\} & 1.00 & \{feature (0.5), selection (0.5)\} & 1.00  \\ \hline
5 & \{nearest (0.5), neighbor (0.5)\} & 1.00 & \tabincell{c}{\{semi (0.46), supervised (0.47),\\ learning (0.07)\}} & 1.00  \\ \hline
\end{tabular}
\caption{\updates{Top 5 Topics w.r.t.\ Graph Affinity}}
\label{tab:DM_G1G2}
\end{table}
}
\subsection{Efficiency Comparison}

Limited by space, we focus on the running time of the $\textbf{DCSGA}$ algorithms, since all $\textbf{DCSAD}$ algorithms have quasi-linear time complexity $O((m_1+m_2+n)\log{n})$, and are efficient and scalable in practice. 

Besides the above DCS mining tasks, to compare the efficiency of the algorithms, we also employed several other data sets whose statistics can be found in Table~\ref{tab:gd_sta}. How these datasets were generated and the description of experiments on these datasets please refer to the Appendix.
\nop{
Besides the above DCS mining tasks, to compare the efficiency of the algorithms, we also employed two large data sets, DBLP-C and Actor. The DBLP-C data set contains timestamped co-authorship records. We split all co-authorship records into two almost even parts by setting a specific timestamp as the separation timestamp. Then the two parts were used to build two co-author graphs $G_1$ and $G_2$, where the weight of an edge between two vertices (authors) is the number of collaborations. Similar to the Emerging/Disappearing co-author group mining task, we adopted the Weighted and Discrete settings to build the difference graph $G_D$. The actor data set is a collaboration network of actors, where the weight of an edge between two vertices (actors) is the number of collaborations. We directly used this actor collaboration network as a difference graph, since as pointed out in Section~\ref{sec:refine}, our \textbf{DCSGA} algorithms are also competitive solutions to traditional graph affinity maximization on weighted graphs. For the Actor difference graph, we also tried the Weighted setting and the Discrete setting, where in the Discrete setting we set edge weights $D(u,v)=10$ if $D(u,v)$ originally was greater than 10. The statistics of DBLP-C and Actor difference graphs can be found in Table~\ref{tab:gd_sta}. Table~\ref{tab:large_ga} reports the DCS found, where all $\textbf{DCSGA}$ algorithms again found the same DCS every time.
}
\begin{table*}[t]
\centering
\begin{tabular}{|c|c|c|c|c|c|c|}
\hline
Data & Setting & $G_D$ Type & NewSEA & \tabincell{c}{SEACD+\\Refine} & \tabincell{c}{SEA+\\Refine} & \tabincell{c}{\#Errors\\in SEA} \\ \hline
DBLP & Weighted & Emerging & 0.05 & 3.2 & 14.3 & 1 \\ \hline
DBLP & Weighted & Disappearing & 0.05 & 3.2 & 13.7 & 1 \\ \hline
DBLP & Discrete & Emerging & 0.06 & 2.9 & 7.3 & 2 \\ \hline
DBLP & Discrete & Disappearing & 0.06 & 2.9 & 6.8 & 0 \\ \hline
DM & \NA & Emerging & 0.35 & 14.1 & 185.3 & 0 \\ \hline
DM & \NA & Disappearing & 0.21 & 6.9 & 36.3 & 0 \\ \hline
Wiki & \NA & Consistent & 56.6 & 452 & 36121 & 80 \\ \hline
Wiki & \NA & Conflicting & 23.8 & 110 & 7703 & 211 \\ \hline
Movie & \NA & \tabincell{c}{Interest$-$\\Social} & 16.3 & 29.6 & 580.6 & 1 \\ \hline
Movie & \NA & \tabincell{c}{Social$-$\\Interest} & 23.1 & 32.7 & 404.8 & 1 \\ \hline
Book & \NA & \tabincell{c}{Interest$-$\\Social} & 2.02 & 14.5 & 53.2 & 0 \\ \hline
Book & \NA & \tabincell{c}{Social$-$\\Interest} & 20.9 & 32.7 & 397 & 0 \\ \hline
DBLP-C & Weighted & \NA & 2.01 & 8054 & 23090 & 118 \\ \hline
DBLP-C & Discrete & \NA & 12.3 & 7678 & 22837 & 131 \\ \hline
Actor & Weighted & \NA & 2.3 & 2249 & 73671 & 321 \\ \hline
Actor & Discrete & \NA & 155 & 2574 & 124132 & 4419 \\ \hline
\end{tabular}
\caption{Running time in seconds.}
\label{tab:time}
\end{table*}

Table~\ref{tab:time} reports the running time of each $\textbf{DCSGA}$ algorithm on each data set. Since we set different convergence conditions for the Shrink stage of each algorithm, one may wonder whether the convergence condition for SEA+Refine is too strict and makes SEA+Refine not as efficient as the other two algorithms. Thus, we also report the number of errors made by SEA+Refine in the Expansion stages. Note that the errors in Expansion are caused by that the Shrink stage cannot reach a local KKT point. From Table~\ref{tab:time} we find that the SEA+Refine algorithm often made mistakes in the Expansion stage, which means the convergence condition for the Shrink stage of SEA+Refine is still too loose to achieve a local KKT point. It is worth noting that the two algorithms using our coordinate descent algorithm in the Shrink stage, NewSEA and SEACD+Refine, never made mistakes in the Expansion stage. We also find that our NewSEA algorithm often is much faster than the other two algorithms. Note that the only difference between NewSEA and SEACD+Refine is the smart initialization heuristic. Compared to SEACD+Refine, the smart initialization heuristic sometimes brings us a speed up of 3 orders of magnitude. Moreover, SEACD+Refine is always faster than SEA+Refine, sometimes 80 times faster. It seems when the input $G_{D^+}$ is sparse, SEACD+Refine and SEA+Refine are close in efficiency. When $G_{D^+}$ becomes denser, the gap in efficiency gets larger. The Expansion error rate (defined by $\frac{\text{\#Errors in SEA}}{n}$) seems correlated with how dense $G_{D^+}$ is. The results are shown in Fig.~\ref{fig:avrng}, where $m^+/n$ measures how dense $G_{D^+}$ is, and $m^+$ is the number of edges in $G_{D^+}$.

\begin{figure}[t]
  \centering
    \subfigure[Speed-Up]{
      \includegraphics[width=.17\textwidth]{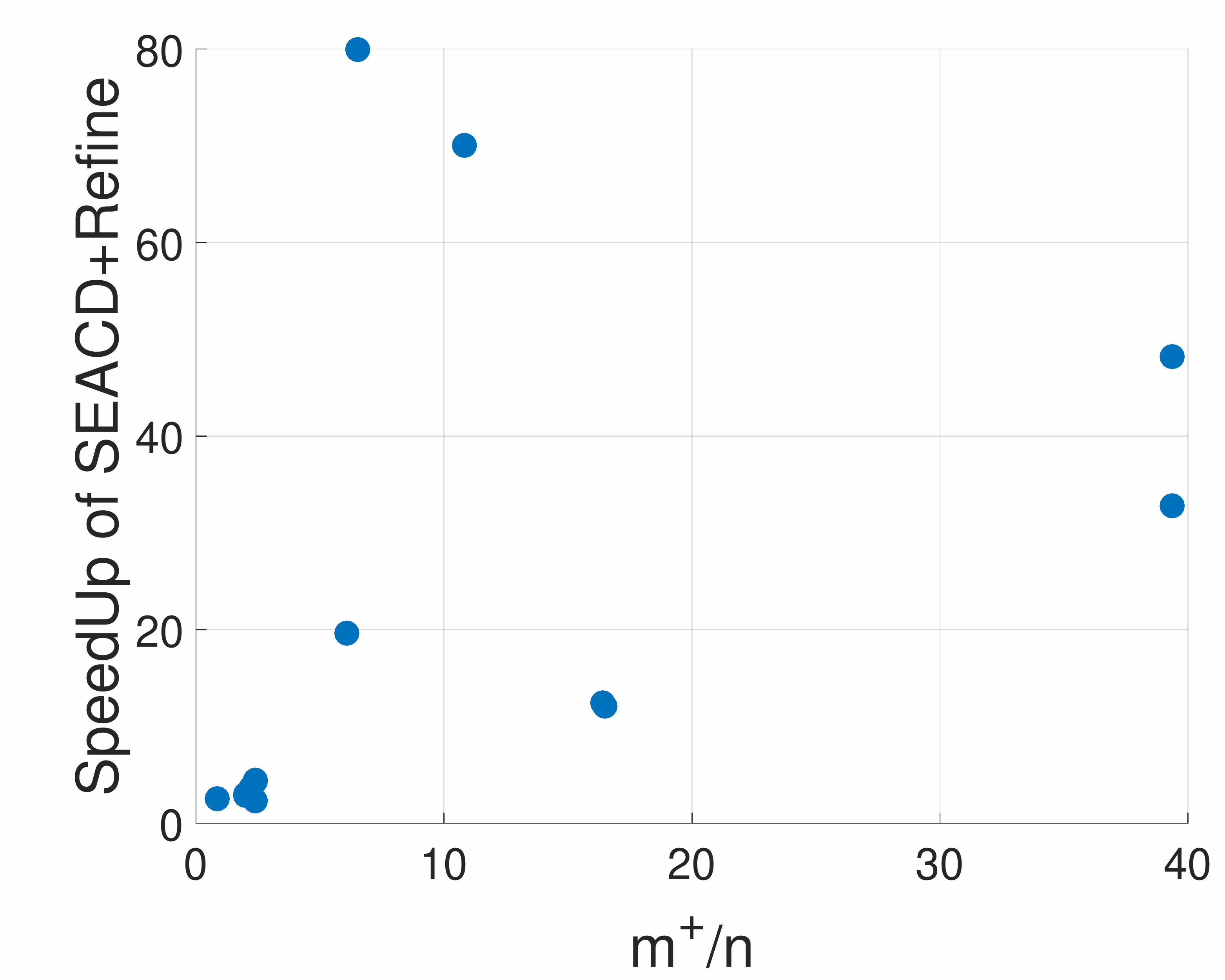}
    }
    \subfigure[Errors of SEA]{
      \includegraphics[width=.17\textwidth]{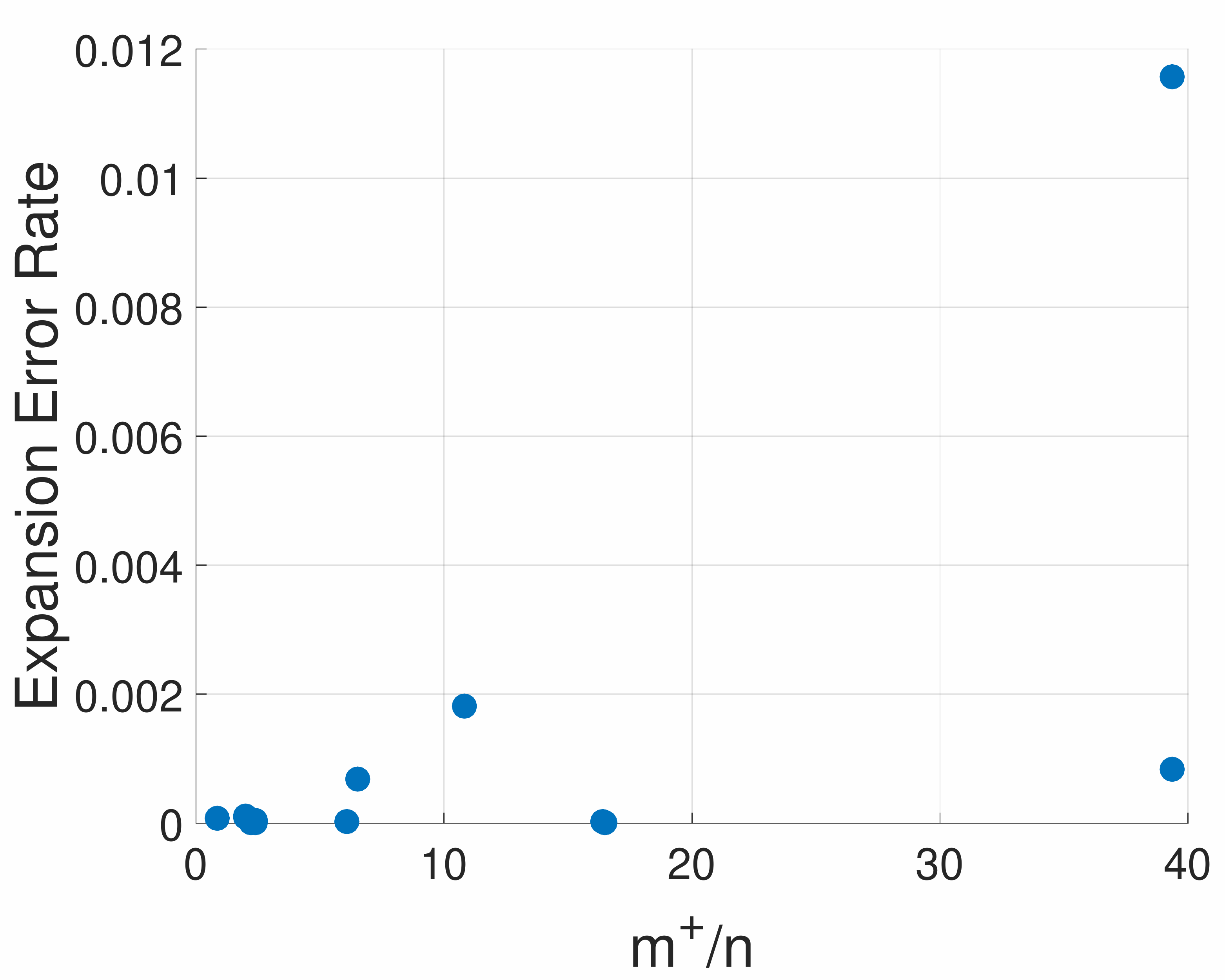}
    }
    \caption{SpeedUp of SEACD+Refine and Errors in Expansions of SEA+Refine}
    \label{fig:avrng}
\end{figure}

\subsection{\updates{Comparison with EgoScan~\cite{cadena2016dense}}}
\updates{Both DCSAD and DCSGA are new problems that were not discussed in literature before, and this paper focuses on algorithmic solutions to the two problems, so there are no very suitable baselines for our algorithms. However, in this section, we still compare our DCS mining algorithms with the EgoScan algorithm in~\cite{cadena2016dense}, which is the work closest to ours in literature. The objective of EgoScan is to maximize $W_D(S)$ subject to $S \subseteq V$ on the difference graph $G_D$. }

\updates{We ran the EgoScan algorithm\footnote{We thank the authors of~\cite{cadena2016dense} for providing us the code of EgoScan.} on the datasets used in our experiments. Unfortunately, since EgoScan needs a Semi-Definite Programming (SDP) solver as a frequently used subroutine, and the SDP solver is really slow and consumes too much memory when ego nets of vertices are large (having more than thousands of vertices), we only got results on the 4 DBLP co-author difference graphs that we used to draw emerging/disappearing co-author groups. For the 4 graphs, EgoScan always spent more than 100 seconds to finish. For other datasets, either EgoScan could not finish running in one day or the memory (16GB) of our machine was not enough for the SDP solver. The high computational cost is actually one drawback of applying EgoScan in practice.}

\updates{We display the results of running EgoScan on the DBLP co-author data. Since all co-author groups found by EgoScan have at least 44 authors, we cannot list all the authors. We only show statistics of these co-author groups. From Table~\ref{tab:egoscan_sta} and referring to Table~\ref{tab:group_info} which shows statistics of the author groups found by our DCS algorithms, we find that our DCS algorithms are much better than EgoScan in finding DCS w.r.t.\ average degree and edge density. Moreover, subgraphs found by EgoScan are all big, even bigger than the subgraphs found by our DCSGreedy algorithms.}

\begin{table*}[t]
\centering
\begin{tabular}{|c|c|c|c|c|c|c|c|}
\hline
Setting & $G_D$ Type & \#Authors & \#Edges & Positive Clique? & Ave. Degree Difference &  Edge Density Difference ($\frac{W_D(S)}{|S|^2}$) \\ \hline
Weighted & Emerging & 82 & 473 & No & 26.95 & 0.3287 \\ \hline
Weighted & Disappearing & 59 & 311 & No & 45.39 & 0.7693 \\ \hline
Discrete & Emerging & 44 & 124 & No & 7.46 & 0.1694 \\ \hline
Discrete & Disappearing & 80 & 527 & No & 13.8 & 0.1725 \\ \hline
\end{tabular}
\caption{\updates{Statistics of Co-Author Groups (Subgraphs) Found by EgoScan}}
\label{tab:egoscan_sta}
\end{table*}

\updates{We also compare our DCS algorithms with EgoScan in finding subgraphs w.r.t.\ the total edge weight difference $W_D(S)$, which is shown in Table~\ref{tab:wd_comp}. Note that the total edge weigh difference of a solution $\textbf{x}$ returned by our NewSEA algorithm is defined as $W_D(S_{\textbf{x}})$. Under the evaluation metric of total edge weight difference, EgoScan performs much better than our DCS algorithms.}
\begin{table}[t]
\centering
\begin{tabular}{|c|c|c|c|c|}
\hline
Setting & $G_D$ Type & DCSGreedy & NewSEA ($W_D(S_{\textbf{x}})$) & EgoScan \\ \hline
Weighted & Emerging & 326 & 326 & 2210 \\ \hline
Weighted & Disappearing & 858 & 100 & 2678 \\ \hline
Discrete & Emerging & 84 & 84 & 328 \\ \hline
Discrete & Disappearing & 472 & 112 & 1104 \\ \hline
\end{tabular}
\caption{\updates{Total edge weight difference ($W_D(S)$) of Co-Author Groups Found by DCS algorithms and EgoScan}}
\label{tab:wd_comp}
\end{table}

\updates{Table~\ref{tab:egoscan_sta},~\ref{tab:wd_comp} and~\ref{tab:group_info} show that DCS w.r.t.\ different measures could be very different. We have the following rough suggestions for deciding which measures to use in practice: (1) if users prefer small DCS and good interpretability, we should take graph affinity as the density measure and apply our NewSEA algorithm, since it always returns a positive clique where for every pair of vertices, their connection in $G_2$ is tighter than their connection in $G_1$; (2) if users prefer a medium sized subgraph, then average degree should be the measure and we apply our DCSGreedy algorithm; (3) If users want a even larger subgraph, total edge weight maybe the suitable measure because it seems that such a measure encourages even bigger subgraphs than average degree.}

\section{Conclusion}\label{sec:con}
In this paper, we studied the Density Contrast Subgraph problem that have interesting applications in practice. Two popularly adopted graph density measures, average degree and graph affinity, were considered. We proved the hardness of the DCS problem under the two measures, and devised algorithms that work well in practice for finding DCS under both density measures. We reported a series of experiments on both real and synthetic datasets and demonstrated the effectiveness and efficiency of our algorithms.

There are some interesting future directions. For example, our methods are based on graph density, but density sometimes cannot reflect how ``dissimilar'' a subgraph looks in two graphs. Thus, how to extract subgraphs that are dissimilar in two graphs with respect to some graph similarity measures~\cite{koutra2013deltacon} is interesting. Also, our methods only mine one DCS with the greatest density difference, how to mine multiple subgraphs with big density difference is another interesting direction.

\bibliographystyle{abbrv}
\bibliography{DCS}

\section*{Appendix}\label{sec:appendix}

\subsection{The SEA Algorithm}

The SEA algorithm~\cite{liu2013fast} solves Eq.~\ref{eq:DCS_ga} when the symmetric matrix $D$ only has non-negative entries. The strategy of SEA is to iteratively find a local KKT point (Shrink stage) and expand it to more vertices (Expansion stage) until convergence. 

\noindent \textbf{Shrink Stage.} To find a local KKT point on a set of vertices $S$, a replicator dynamic is exploited. The replicator equation is
\begin{equation}\label{eq:sea_shrink}
    x_i(t+1)=x_i(t)\frac{(D\textbf{x})_i}{\textbf{x}(t)^\top D \textbf{x}(t)}, ~i \in S
\end{equation}
where $x_i(t)$ is the value of $x_i$ in the $t$-th iteration. To make this replicator dynamic converge, $D$ should be non-negative.

\smallskip

\noindent \textbf{Expansion Stage.} In the Expansion stage, SEA firstly find the set $Z$ as Algorithm~\ref{alg:seacd} does in Line~\ref{line:expansion}. According to Eq.~\ref{eq:sea_shrink}, if $x_i$ at the beginning of the replicator dynamic is 0, it will stay 0 forever. Thus, SEA needs to give a positive initial value $x_v$ to each vertex $v \in Z$. To do that, we first define the \textbf{$\gamma$} vector,
\begin{equation*}
    \gamma_i = 
        \begin{cases}
            0~~~~~~~~~~~~~~~~~~~~~i \notin Z \\
            \nabla_if_D(\textbf{x})-f_D(\textbf{x})~~~i \in Z
        \end{cases}
\end{equation*}
where $\textbf{x}$ is the local KKT point to be expanded by $Z$. Let $s=\sum_{i \in Z}{\gamma_i}$, $\zeta=\sum_{i \in Z}{\gamma_i^2}$ and $\omega=\sum_{i,j \in Z}{\gamma_i\gamma_jD(i,j)}$. The Expansion stage updates $\textbf{x}$ along the direction $\textbf{b}$, where
\begin{equation*}
     b_i = 
        \begin{cases}
            -x_is~~~~~~~~~i \in S_{\textbf{x}} \\
            \gamma_i~~~~~~~~~~~~i \in Z
        \end{cases}
\end{equation*}
Let $f_D(\textbf{x})=\bar{\lambda}$. We maximize $f_D(\textbf{x}+\tau \textbf{b})-f_D(\textbf{x})=-(\bar{\lambda}s^2+2s\zeta-\omega)\tau^2-2\zeta\tau$ over $\tau$. The best $\tau$ can be found analytically. Let $a=\bar{\lambda}s^2+2s\zeta-\omega$, when $a \leq 0$ we set $\tau=\frac{1}{s}$, and we set $\tau=\min\{\frac{1}{s},-\frac{1}{a}\}$ otherwise. Then $\textbf{x}$ is updated to $\textbf{x}+\tau \textbf{b}$.

\subsection{More Experimental Results}\label{sec:more_exp}

\begin{table*}[h]
\centering
\begin{tabular}{|c|c|c|c|c|c|c|c|c|c|c|c|}
\hline
\multirow{2}*{$G_D$ Type} & \multicolumn{4}{|c|}{DCSGreedy} & \multicolumn{3}{|c|}{$G_D$ Only} & \multicolumn{3}{|c|}{$G_{D^+}$ Only}  \\ \cline{2-11}
& \#Users & \tabincell{c}{Ave. Degree\\Difference} & \tabincell{c}{Approx.\\Ratio} & \tabincell{c}{Positive\\Clique?} & \#Users & \tabincell{c}{Average\\Degree} & \tabincell{c}{Positive\\Clique?} & \#Users & \tabincell{c}{Average\\Degree} & \tabincell{c}{Positive\\Clique?} \\ \hline 
Consistent & 937 & 398.71 & 2.13 & No & 1013 & 345.25 & No & 937 & 398.71 & No \\ \hline
Conflicting & 222 & 335.03 & 2.06 & No & 222 & 335.03 & No & 230 & 332.24 & No \\ \hline
\end{tabular}
\caption{Effectiveness Comparison of Mined DCSs with respect to Average Degree on Wiki Data}
\label{tab:wiki_ad}
\end{table*}

\begin{table}[H]
\centering
\begin{tabular}{|c|c|c|c|c|c|c|c|}
\hline
$G_D$ Type & \#Users & \tabincell{c}{Graph Affinity\\Difference} & \tabincell{c}{Edge Density\\Difference$\frac{W_D(S_{\textbf{x}})}{|S_{\textbf{x}}|^2}$} \\ \hline
Consistent & 5 & 6.901 & 6.845 \\ \hline
Conflicting & 6 & 6.456 & 6.209 \\ \hline
\end{tabular}
\caption{DCS with respect to Graph Affinity on Wiki Data}
\label{tab:wiki_ga}
\end{table}

\begin{table*}[h]
\centering
\begin{tabular}{|c|c|c|c|c|c|c|c|c|c|c|c|c|}
\hline
\multirow{2}*{Interest} & \multirow{2}*{$G_D$ Type} & \multicolumn{4}{|c|}{DCSGreedy} & \multicolumn{3}{|c|}{$G_D$ Only} & \multicolumn{3}{|c|}{$G_{D^+}$ Only}  \\ \cline{3-12}
& & \#Users & \tabincell{c}{Ave. Degree\\Difference} & \tabincell{c}{Approx.\\Ratio} & \tabincell{c}{Positive\\Clique?} & \#Users & \tabincell{c}{Average\\Degree} & \tabincell{c}{Positive\\Clique?} & \#Users & \tabincell{c}{Average\\Degree} & \tabincell{c}{Positive\\Clique?} \\ \hline 
Movie & Interest$-$Social & 968 & 176.002 & 2.05 & No & 968 & 176.002 & No & 1003 & 175.958 & No \\ \hline
Movie & Social$-$Interest & 4047 & 68.288 & 2.11 & No & 4047 & 68.288 & No & 4473 & 62.351 & No \\ \hline
Book & Interest$-$Social & 610 & 43.190 & 2.12 & No & 610 & 43.190 & No & 719 & 42.754 & No \\ \hline
Book & Social$-$Interest & 4175 & 71.280 & 2.03 & No & 4175 & 71.280 & No & 4403 & 70.825 & No\\ \hline
\end{tabular}
\caption{Effectiveness Comparison of Mined DCSs with respect to Average Degree on Douban Data} 
\label{tab:douban_ad}
\end{table*}

\begin{table}[H]
\centering
\begin{tabular}{|c|c|c|c|c|c|c|c|}
\hline
Interest & $G_D$ Type & \#Users & \tabincell{c}{Graph Affinity\\Difference} & \tabincell{c}{Edge Density\\Difference$\frac{W_D(S_{\textbf{x}})}{|S_{\textbf{x}}|^2}$} \\ \hline
Movie & Interest$-$Social & 32 & 0.969 & 0.969 \\ \hline
Movie & Social$-$Interest & 18 & 0.944 & 0.944 \\ \hline
Book  & Interest$-$Social & 14 & 0.929 & 0.929 \\ \hline
Book  & Social$-$Interest & 22 & 0.955 & 0.955 \\ \hline
\end{tabular}
\caption{DCS with respect to Graph Affinity on Douban Data}
\label{tab:douban_ga}
\end{table}

\begin{table}[H]
\centering
\begin{tabular}{|c|c|c|c|c|c|c|c|}
\hline
Data & Setting & \#Users & \tabincell{c}{Graph Affinity\\Difference} & \tabincell{c}{Edge Density\\Difference$\frac{W_D(S_{\textbf{x}})}{|S_{\textbf{x}}|^2}$} \\ \hline
DBLP-C & Weighted & 2 & 200 & 200 \\ \hline
DBLP-C & Discrete & 26 & 1.919 & 1.917 \\ \hline
Actor  & Weighted & 3 & 108.25 & 98.44 \\ \hline
Actor  & Discrete & 21 & 6.46 & 6.21 \\ \hline
\end{tabular}
\caption{DCS with respect to Graph Affinity on DBLP-C and Actor Data}
\label{tab:large_ga}
\end{table}

\subsubsection{Extracting Consistent and Conflicting Editor Groups}
We also tested our DCS mining algorithm on a Wikipedia network with editor (user) interaction data (\url{http://konect.uni-koblenz.de/networks/wikiconflict}). This wiki dataset has two weighted networks, a positive interaction network $G_1$ and a negative interaction network $G_2$, where vertices are editors of Wikipedia pages. An example of a negative interaction is when one user revert the edit of another user.  

Similar to the emerging/disappearing co-author group mining task, we also mined two kinds of editor groups. The first kind is the editor group whose consistency in editing is much more than their conflict and the second kind is the opposite. We call the first kind of editor group the Consistent group and the second kind the Conflicting group. For mining the Consistency group and the Conflicting group, we ran the algorithms on the Consistent $G_D$, which is set to $G_1-G_2$, and the Conflicting $G_D$, which equals $G_2-G_1$, respectively.

Tables~\ref{tab:wiki_ad} and~\ref{tab:wiki_ga} show the results. The 3 \textbf{DCSGA} algorithms again produced the same result. Unlike in the DBLP dataset, where $\textbf{DCSAD}$ algorithms all find positive cliques as DCS, this time none of the subgraphs produced by them is a positive clique. Moreover, all DCS with respect to average degree are significantly larger in size than the DCS with respect to graph affinity. This is similar to the fact that for mining dense subgraphs in a single graph, average degree density encourages large subgraphs while graph affinity density prefers subgraphs with small size~\cite{wang2016tradeoffs, mitzenmacher2015scalable}.

\subsubsection{Mining DCS in Douban Network}
We also applied our DCS mining algorithms on a Douban dataset~\cite{xu2012towards}, where Douban is a famous content-sharing online social network in China. The Douban dataset contains a user social network, and movie ratings and book ratings of every user. We extract users who have rated at least 50 movies and 20 books, and the induced subgraph of these users in the social network is recorded as $G_1$. To build the interest similarity graph $G_2$, we utilize user ratings data. Specifically, we computed the Jaccard similarity of movies/books rated by two users $u$ and $v$ if $u$ and $v$ are within 2 hops in $G_1$. For the movie similarity graph, we built an edge $(u,v)$ if the Jaccard similarity between the movie lists rated by them is greater than 0.2. For the book similarity graph, this threshold value is set to 0.1, because Book ratings are sparser than Movie ratings. Both $G_1$ and $G_2$ are uniformly weighted graph, that is, edge weights are all 1.

Still, we built two types of $G_D$, Interest$-$Social ($G_2-G_1$) and Social$-$Interest ($G_1-G_2$). The Interest$-$Social $G_D$ is for mining DCS whose density in the interest graph minus its density in the social graph is maximized, while the Social$-$Interest $G_D$ is for mining the opposite kind of DCS. Before we show the results of mining DCS, let us first look at the statistics of the difference graphs of the Douban data in Table~\ref{tab:gd_sta}. Even we set a lower threshold on the Jaccard similarity for building the Book interest similarity graph, the Interest$-$Social difference graph of Book still has substantially less positive weighted edges than the Interest$-$Social difference graph of Movie. Moreover, no matter the interest is Movie or Book, the Interest$-$Social difference graph has much less positive edges than the Social$-$Interest difference graph. Our experimental results will show more interesting findings that are not reflected just by the statistics of difference graphs.

Table~\ref{tab:douban_ad} and Table~\ref{tab:douban_ga} show the results. Similar to the wiki dataset, the \textbf{DCSAD} algorithms all find big subgraphs. All \textbf{DCSGA} algorithms again extract the same embedding (subgraph) in every type of the difference graph.

One interesting finding from Table~\ref{tab:douban_ad} and Table~\ref{tab:douban_ga} is that no matter what graph density is used, for the Movie interest, the DCS from the Interest$-$Social $G_D$ is denser (has a greater density difference) than the DCS from the Social$-$Interest $G_D$, which has more positive edges than the Interest$-$Social $G_D$. However, for the Book interest, we get the opposite result. 

Remember that the SEACD+Refinement algorithm do initializations using every vertex in $G_D$, and it actually finds multiple positive cliques in $G_D$. Thus, we also report the statistics of cliques found by the SEACD+Refinement algorithm. We removed the duplicate cliques and the cliques that are subgraphs of other cliques found. Fig~\ref{fig:clique} shows the results, where for the Movie interest, we report the counts of $k$-cliques found where $k \geq 10$ and for the Book interest we report the counts of $k$-cliques found where $k \geq 8$. Although the Social$-$Interest $G_D$ has much more positive edges than the Interest$-$Social $G_D$ for the Movie interest, the Social$-$Interest $G_D$ has more and larger positive cliques than the Interest$-$Social $G_D$. Again, for the Book data, the situation is the opposite. 

\begin{figure}[h]
  \centering
    \subfigure[Movie]{
      \includegraphics[width=.2\textwidth]{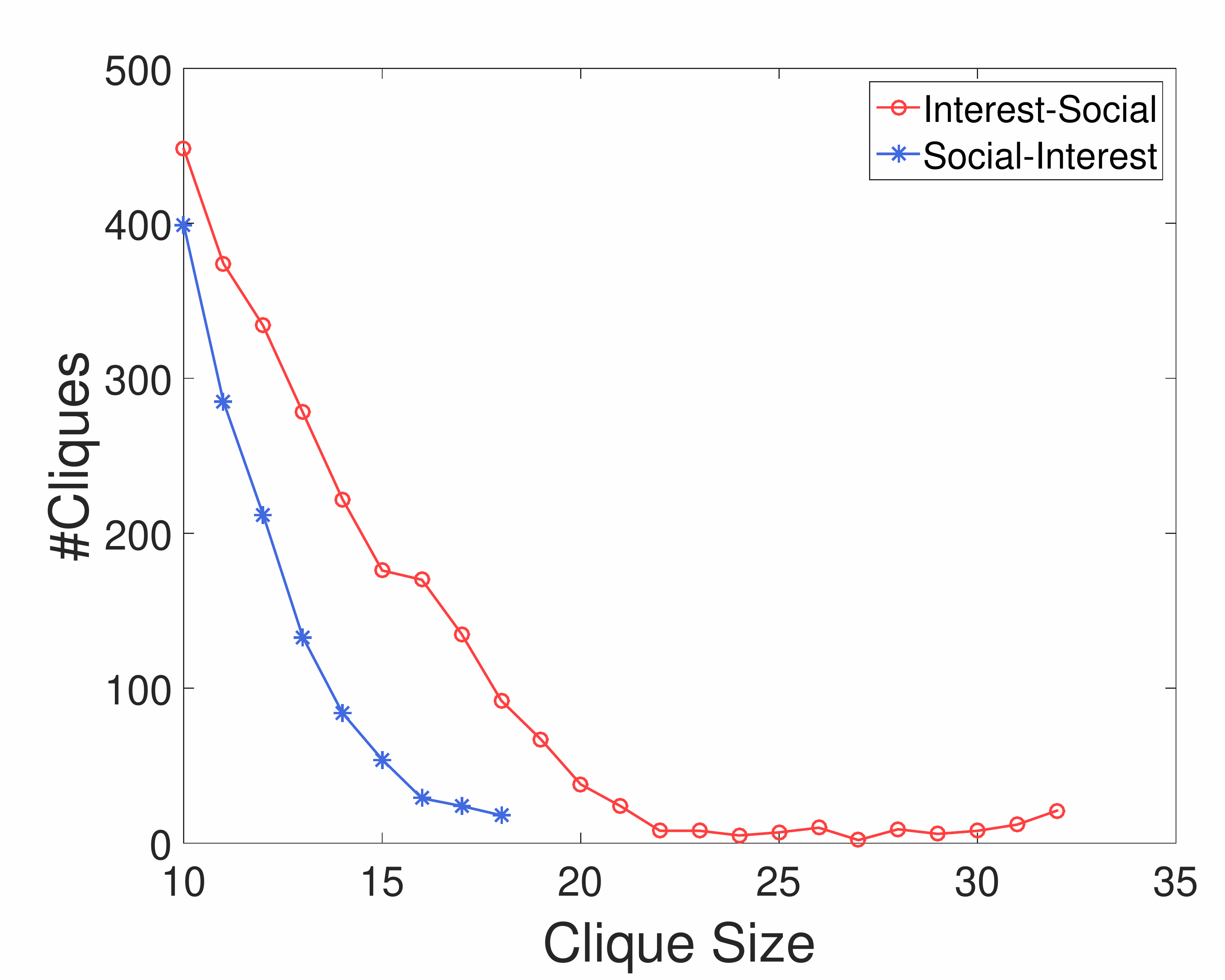}
    }
    \subfigure[Book]{
      \includegraphics[width=.2\textwidth]{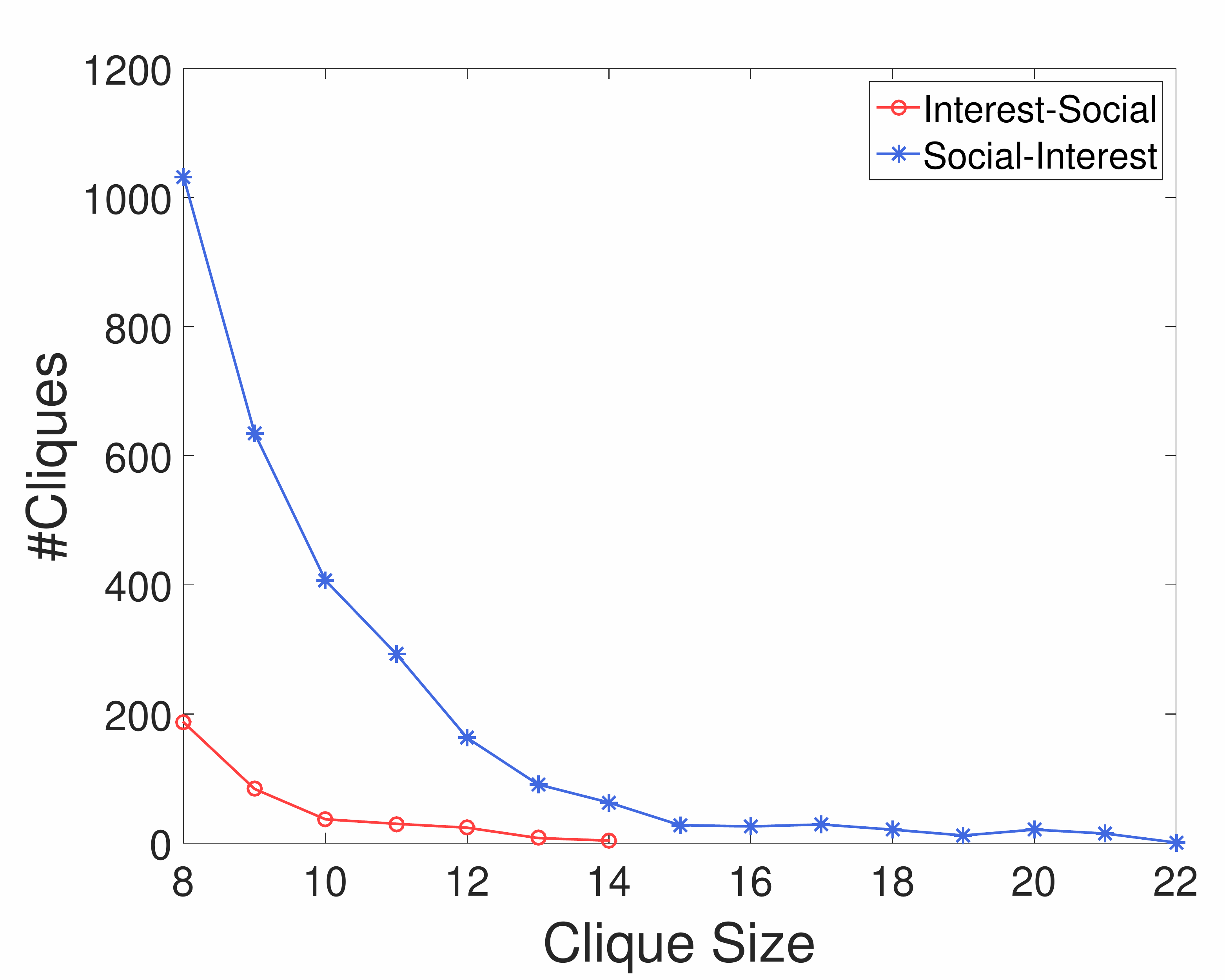}
    }
    \caption{Clique Counts of Douban Data}
    \label{fig:clique}
\end{figure} 

The results of mining DCS from the Douban data suggest that the formation of the Douban social network may depend more on users' interest similarity on Movie than on Book.

\subsubsection{DBLP-C and Actor Datasets}
To compare the efficiency of the algorithms, we also employed two large data sets, DBLP-C and Actor. The DBLP-C data set contains timestamped co-authorship records. We split all co-authorship records into two almost even parts by setting a specific timestamp as the separation timestamp. Then the two parts were used to build two co-author graphs $G_1$ and $G_2$, where the weight of an edge between two vertices (authors) is the number of collaborations. Similar to the Emerging/Disappearing co-author group mining task, we adopted the Weighted and Discrete settings to build the difference graph $G_D$. The actor data set is a collaboration network of actors, where the weight of an edge between two vertices (actors) is the number of collaborations. We directly used this actor collaboration network as a difference graph, since as pointed out in Section~\ref{sec:refine}, our \textbf{DCSGA} algorithms are also competitive solutions to traditional graph affinity maximization on weighted graphs. For the Actor difference graph, we also tried the Weighted setting and the Discrete setting, where in the Discrete setting we set edge weights $D(u,v)=10$ if $D(u,v)$ originally was greater than 10. The statistics of DBLP-C and Actor difference graphs can be found in Table~\ref{tab:gd_sta}. Table~\ref{tab:large_ga} reports the DCS found, where all $\textbf{DCSGA}$ algorithms again found the same DCS every time.

\end{document}